\newcommand{\alice}{{\mathcal{A}}}
\newcommand{\bob}{{\mathcal{B}}}
\newcommand{\ds}{\displaystyle}
\newtheorem{theorem}{Theorem}
\begin{document}

\author{P. Yadav$^{1,2}$, P. Mateus$^{1,2}$, N. Paunkovi\'c$^{1,2}$, A. Souto$^{2,3}$}
\affiliation{$^{1}$Instituto de Telecomunica\c{c}\~oes, 1049-001 Lisboa, Portugal}
\affiliation{$^{2}$Departmento de Matem\'atica, Instituto Superior T\'ecnico, Universidade de Lisboa, Av. Rovisco Pais, 1049-001 Lisboa, Portugal}
\affiliation{$^{3}$LaSIGE, Departamento de Inform\'atica, Faculdade de Ci\^encias, Universidade de Lisboa, 1749-016 Lisboa, Portugal}

\pacs{}

\title{Quantum contract signing with entangled pairs}
\begin{abstract}
We present a quantum scheme for signing contracts between two clients (Alice and Bob) using entangled states and the services of a third trusted party (Trent). The trusted party is only contacted for the initialization of the protocol, and possibly at the end, to verify clients' honesty and deliver signed certificates. The protocol is 
fair, i.e., the probability that a client, say Bob, can obtain a signed copy of the contract, while Alice cannot, can be made arbitrarily small, and scales as $N^{-1/2}$, where $4N$ is the total number of rounds (communications between the two clients) of the protocol. Thus, the protocol is optimistic, as cheating is not successful, and the clients rarely have to contact Trent to confirm their honesty by delivering the actual signed certificates of the contract. Unlike the previous protocol (Paunkovi\'c {et al.}, {\em Phys. Rev. A} {\bf 84}, 062331 (2011)), in the present proposal, a single client can obtain the signed contract alone, without the need for the other client's presence. When first contacting Trent, the clients do not have to agree upon a definitive contract. Moreover, even upon terminating the protocol, the clients do not reveal the actual contract to Trent. Finally, the protocol is based on the laws of physics, rather than on mathematical conjectures and the exchange of a large number of signed authenticated messages during the actual contract signing process. Therefore, it is abuse-free, as Alice and Bob cannot prove they are involved in the contract signing process.
\end{abstract}

\maketitle

\section{Introduction}
Quantum cryptography traces back to late 1960s and early 1970s work on quantum money by Stephen Wiesner. While this work was published only a decade later, in~1983~\cite{wie:83}, it had a significant impact on what usually is considered the birth of quantum cryptography, the~seminal BB84 paper on quantum key distribution (QKD)~\cite{ben:14}. Secure quantum communication offers higher, unconditional (i.e., information theoretic) security levels, as~opposed to the computational security of classical cryptography. As~a consequence, it became among the most prominent of the emerging quantum technologies, and~QKD systems are currently available on the market, such as ID Quantique, QuintessenceLabs, etc. Other related protocols, such as quantum secure direct communication~\cite{long:02}, have also been developed. Nevertheless, there is much beyond key distribution that quantum cryptography can offer, such as quantum secret sharing~\cite{hill:99}, quantum private query~\cite{giova:08,gao:19}, and~quantum secure distributed learning~\cite{sheng:17}. Secure multiparty computation (SMC)~\cite{lin:pin:09} presents another class of cryptographic protocols in which the privacy of users' data and inputs is protected. Instances~of such schemes include private data mining and e-voting, to~name a few. Recently, quantum solutions' bit commitment and oblivious transfer, cryptographic primitives that allow for execution of SMC, have~been proposed~\cite{bro:sch:16}. In~the current work, we present a quantum solution to the contract signing~problem. 

Contract signing~\cite{even:yacobi:80} is a security protocol that falls within the group of the so-called commitment protocols~\cite{loura:14,almeida:16,loura:16}. In~general, the~protocol can be defined for an arbitrary number of parties (clients engaged in the protocol). For~simplicity, we discuss the case of a two-party protocol, which can be straightforwardly generalized to an arbitrary number of~participants. 

%%%We agree with the changes of removing the italic. 
The participants, usually referred to as Alice and Bob, have a common contract upon which they decide to commit, or~not. The~commitments are traditionally done by simple signatures: having a text of the contract with Bob's signature stamped on it, Alice can appeal to the authorities (the Judge), which in turn declares the document valid (i.e., binds the contract). In~other words, having Bob's signature gives Alice the power to enforce the terms of the contract. Consequently, signing his name on a copy of the contract means that Bob commits to the contract. The~aim of a contract signing protocol is that either both clients obtain each others' commitments or none of them do (the protocol is said to be fair). Further, if~both clients follow the protocol correctly, both of them can obtain each others' commitments with certainty (the protocol is then said to be viable).

If only Alice has a copy with Bob's signature (i.e., only Bob is committed), she can later in time choose to either enforce, or~not, the~terms of the contract. Bob, however, has no power whatsoever: his~future behavior is determined solely by Alice's decisions. For~example, Alice may have a document with Bob's signature on it, declaring that he would buy a car from her, for~a fixed amount of money. Knowing that only she has such a document, Alice can continue to negotiate the price of her car with other potential customers: in case she obtains a better offer, she is free to discard Bob's offer and thus is able to earn more money. Bob does not have such an option: if Alice does not obtain a better offer, she can always force Bob to buy the car from her, by~showing to the authorities the contract signed by Bob. Having no proven commitment (signature) from Alice, Bob cannot enforce the contract himself and is thus unable to prevent Alice from such behavior, which puts him in an unfair~situation. 

Achieving fairness is trivial when clients meet up and simultaneously sign copies of the contract, thus both obtaining each others' commitments. Unfortunately, doing so when the clients are far apart, e.g.,~over the Internet, is difficult: indeed, sending his signed copy to Alice gives Bob no guarantee that he will obtain one from Alice; on the other hand, obtaining a signature from Alice before actually sending his gives Bob the advantage of having Alice's commitment without committing~himself.

It has been shown~\cite{fis:lyn:pat:85,even:yacobi:80} that the fairness of a contract signing protocol with spatially-distant clients can be achieved only by introducing a trusted third party, usually referred to as Trent, during~the phase of exchanging clients' commitments. Trent's role is to receive clients' commitments and perform the exchange only upon obtaining signed copies of the contract from both clients. However simple and straightforward this solution may seem, it has a drawback, as~Trent (in practice, a~trusted agency accredited by the State that offers its time and resources for the exchange of money, e.g.,~public~notaries) may be expensive. Therefore, the~need for protocols using third parties as little as possible arises. Some contract signing protocols~\cite{even:83,gold:84,even:85} do not require a trusted third party, but~use a number of transmissions to send the pieces of signatures, or~the partial information required to obtain the complete signature, in~each message. Another possible way out is to design optimistic and/or probabilistic~protocols.

In optimistic contract signing protocols~\cite{asokan:97}, the exchange of commitments is, unless~something goes wrong, executed solely by Alice and Bob. Only in case communication between the clients is interrupted (malfunction of the network, etc.), a~trusted third party is involved~\cite{asokan:98}. In~probabilistic protocols~\cite{ben:90,rabin:83}, by~exchanging messages between each other, clients increase their probabilities to bind the contract. To~be (probabilistically) fair, such protocols have to ensure that at each stage of the information exchange, the~probabilities to bind the contract of both clients can be made arbitrarily close to each other (no client is significantly privileged). One such protocol is~\cite{rabin:83}, for~which the symmetry between the clients' positions is strengthened by the requirement that the joint probability that one client can bind the contract, while the other cannot, can be made smaller than any given $\varepsilon < 1$. Finally, there is both a probabilistically fair and optimistic solution, with~an optimal number of exchanged messages~\cite{ben:90} for which even a stronger fairness condition is satisfied: the conditional probability that a client cannot bind the contract, when the other has already done so, can be made arbitrarily~low.

Recently, a~probabilistically fair and optimistic quantum protocol was presented in~\cite{paun:2011} (see a version using the simultaneous dense coding scheme in~\cite{situ:15}). There, the~trusted third party, Trent, was required to initiate the protocol and was contacted later only in case something went wrong. The~protocol in~\cite{paun:2011} was also abuse-free~\cite{garay:99}, i.e.,~the clients cannot provide proofs of being involved in a contract signing procedure. Nevertheless, it has three important disadvantages: (i) Alice and Bob have to share the content of the contract with Trent; (ii) both clients have to be present in order to bind the contract, in~case something goes wrong and Trent's services are required; and (iii) they have to agree upon the content of the contract before the protocol initialization. In~this paper, we propose an improved version of the contract signing protocol where (i) the clients never disclose the content of the contract to Trent, (ii) only one client is needed to bind the contract, and~(iii) the clients can decide upon the contract after they initially contact~Trent.

Regarding Point (iii) from the previous paragraph, note that often, when parties initiate business negotiations, this does not result in making a deal formalized by a contract. Thus, involving Trent, who charges for his services, might often result in the waste of clients' resources, and~Point (iii) might seem not to present a real advantage. Nevertheless, waiting for the last moment and contacting Trent only upon successful agreement and contract formulation might result in the system's failure due to possible communication bottlenecks. Imagine the following situation. Alice and Bob negotiate buying/selling a certain product, say a security system, knowing that on a given date in the future, a~big company will announce a new model with its novel performances. Obviously, the~price of the model used will highly depend on the information about the new one, and~Alice and Bob will only upon learning the new piece of information decide upon the final contract. The~problem is, many other users may choose to make similar business contracts in the same period of time, and~if they all have to only then contact Trent, this might cause a communication bottleneck and the failure of the system. Thus, being able to contact Trent in advance and then, only later, ``offline'' (without contacting the trusted agency) exchange the commitments might be useful, especially in ``more serious'' business deals including larger amounts of~money.

In Section~\ref{sec:setup}, we begin with the description of the contract signing protocol with all the different phases explained in detail. In~Section~\ref{sec:security}, we provide the security analysis of the protocol, together~with relevant numerical results. Finally, in~Section~\ref{sec:conclusion}, we present the conclusions and discuss the contributions to the~area.
 
%%%%%%%%%%%%%%%%%%%%%%%%%%%%%%%%%%%%%%%%%%
\section{The proposed protocol}\label{sec:setup}

In the quantum contract signing proposal~\cite{paun:2011}, Trent sends two strings of qubits, one to Alice and another to Bob, such that each qubit is randomly prepared in one of the four BB84 states. The~commitments are done by measuring one of the two observables on all qubits given to a client: in case of accepting the contract, measurement in the computational basis is performed, while choosing to reject it, one measures in the diagonal basis. Since the two bases are mutually unbiased, as~a consequence of the Heisenberg uncertainty principle, it is impossible to learn both properties simultaneously. Thus, measurement outcomes of each client serve as certificates of the commitments. In~order to achieve the fairness criterion (the two commitment choices have to be the same), the~measurements are done in rounds, such that only one qubit per client is measured in each round and the outcomes are exchanged. Since, in~addition to qubit strings, Trent also informs Alice of Bob's states (and vice~versa), exchanging the outcomes allows clients to check each others' honesty. In~our current proposal, instead of accompanying qubits with the classical information about other client's states, Trent exchanges entangled pairs. As~a consequence of this change, in~the current protocol, the~choice of which out of the two observables clients measures in each round is different, as~described~below. 

Consider two orthonormal qubit states $|0\rangle$ and $|1\rangle$ of the computational basis $\mathcal{B}_{+}$. The~diagonal basis $\mathcal{B}_{\times} = \{ |-\rangle, |+\rangle \}$ is given by $|\pm\rangle=\frac{1}{\sqrt{2}}(|1\rangle\pm |0\rangle)$. We also define two measurement observables:
\begin{equation}
\begin{array}{rl}
\hat{O}_{+} = & 1\cdot |1\rangle\langle 1| + 0\cdot |0\rangle\langle 0|, \vspace{1.7mm}\\
\hat{O}_{\times} = & 1\cdot |+\rangle\langle +| + 0\cdot |-\rangle\langle -|. 
\end{array}
\end{equation}
Let the bit string $M$ be the contract upon which Alice and Bob agree. Let $h$ be a publicly known hash function that they also agree to use. The~value $h^\ast=h(M)$ will be the only information they provide to Trent about the contract $M$, when and if they contact~him. 

The protocol is described below in three parts: (i) initialization phase, Stage I: Alice and Bob contact the trusted party, Trent, who provides each of them with different secret classical information, to~be used in the later phases of the protocol; Stage II: Trent prepares and distributes the states among the clients, to~be used to sign the contract; (ii) exchange phase: the clients, using the states and the information provided by Trent, ping-pong the information needed to sign the contract; and (iii) binding phase: in this phase of the protocol, any one of the clients can contact Trent with his/her results in order to obtain an authorized document declaring the hash value $h^\ast=h(M)$ valid, which then validates the contract $M$.

Note that the exchange of classical information between Trent and the clients, in~Stage I of the initialization phase, occurs over a private channel. During~the rest of the protocol, the~exchange of both quantum and classical information between Trent and the clients only needs to be authenticated. Classical authentication can be dealt with in a similar manner as in QKD, that is by assuming that a secret key is shared between Trent and the clients. How this key is exchanged depends on the level of security that we want to achieve: either an initial key is pre-shared (using a private channel), and~then, it is enlarged using QKD (information-theoretical security); or a public key infrastructure is used (computational security). It is relatively easy to authenticate quantum information upon having classical authentication. It reduces to applying the cut-and-choose technique to verify the authenticity of quantum states exchanged, i.e.,~some random states are used by the clients and Trent to check whether the quantum channel was tampered with. This way, Trent discloses the description of the states over the authenticated classical channel, and~the clients check if what they received is according to what was~expected.

Below, we present the detailed description of the~protocol.

\vspace{12pt}
\noindent\textbf{Initialization phase: Stage I:}
\vspace{12pt}

\textbf{\em Parties involved:} Alice, Bob, and~Trent.

\textbf{\em Input:} Bit strings $k_A$ and $k_B$ of length $4N$ each and $2N$ randomly chosen indices from the set of indices of the $4N$ rounds of the protocol, $\mathcal{I} = \{1,2,\dots,4N\}$.

\textbf{\em Communication channel:} Private classical channel between Trent and the~clients.\\

Stage I of the initialization phase consists of the following~steps:
\begin{itemize}
\item[1.] Alice and Bob contact Trent for his services and inform him about the future time, at~which Trent will begin Stage II of the initialization~phase.

\item[2.] Trent provides Alice and Bob with randomly-generated bit strings $k_A$ and $k_B$, respectively, of~length $4N$ each. Alice and Bob prepare the strings $H_{\texttt A} = h(M) \oplus k_A$ and $H_{\texttt B} = h(M) \oplus k_B$ (bit-wise XOR), respectively. Note that in order to exchange the commitments to the whole string, we have $4N = |h(M)|$. We define the honest observables $\hat{H}_{\texttt{A}_i}$ and $\hat{H}_{\texttt{B}_i}$ for Alice and Bob, respectively, to~be measured at each step $i$ of the protocol ($i \in \mathcal{I}$) as:
\begin{equation}
\hat{H}_{\texttt{A}_i} =
\begin{cases}
\hat{O}_{+} \:\:\:\:\:\:\:\: & \mbox{if } H_{\texttt{A}_i}= 0, \\
\hat{O}_{\times} \:\:\:\:\:\:\:\: & \mbox{if } H_{\texttt{A}_i} = 1,
\end{cases}
\end{equation}
\begin{equation}
\hat{H}_{\texttt{B}_i} =
\begin{cases}
\hat{O}_{+} \:\:\:\:\:\:\:\: & \mbox{if } H_{\texttt{B}_i}= 0, \\
\hat{O}_{\times} \:\:\:\:\:\:\:\: & \mbox{if } H_{\texttt{B}_i} = 1,
\end{cases}
\end{equation}
%\hl{$i^{\text{th}}$} %%% We would like to keep the -th as this designation is more common in the community and in the literature.
where $H_{\texttt{A}_i}$ is the 
$i-{\text{th}}$ bit of the string $H_{\texttt{A}}$ and analogously for Bob's string $H_{\texttt{B}}$. Note that the secret keys $k_A$ and $k_B$ are used by the clients to hide their respective honest observables from each other. As~it will be clear later on, the~introduction of these keys prevents the scenario of a dishonest client, say, Bob, attacking the quantum channel between Trent and Alice by measuring the correct honest observables on Alice's qubits to obtain perfect~correlations.

\item[3.] Trent provides Alice with a set of 2$N$ randomly-chosen indices from the total 4N indices, $\mathcal{I}^{(A)} \subset \mathcal{I}$. Analogously, he randomly chooses 2$N$ indices $\mathcal{I}^{(B)} \subset \mathcal{I}$ and sends them to Bob.

\item[4.] Trent provides Alice with 2$N$ bits of Bob's secret string $k_B$, corresponding to the above-mentioned 2$N$ indices from $\mathcal{I}^{(A)}$. Analogously, he sends the $2N$ bits to Bob from Alice's secret string $k_A$.
\end{itemize}

\noindent\textbf{\em The initialization phase: Stage I ends with the following:} \vspace{-2mm}
\begin{itemize}
\item Alice has a $4N$-long secret bit string $k_A$ and a set $\mathcal{I}^{(A)} \subset \mathcal{I}$ of $2N$ randomly-chosen indices from $\mathcal{I}$. Alice also has $k_{B_j}$ for all indices $j\in \mathcal{I}^{(A)}$.\vspace{-2mm}

\item Bob has a $4N$-long secret bit string $k_B$ and a set $\mathcal{I}^{(B)} \subset \mathcal{I}$ of $2N$ randomly-chosen indices from $\mathcal{I}$. Bob also has $k_{A_j}$ for all indices $j\in \mathcal{I}^{(B)}$

\end{itemize}

\noindent\textbf{Initialization phase: Stage II:}
\vspace{12pt}

\textbf{\em Parties involved:} Alice, Bob, and~Trent.

\textbf{\em Input:} $8N$ number of entangled~pairs.

\textbf{\em Communication channel:} Authenticated classical and quantum channels between Trent and the~clients.\vspace{12pt}

The initialization phase: Stage II consists of the following steps (see Figure~\ref{fig:initialization}):
\begin{itemize}
\item[5.] Trent produces two ordered sets, each consisting of $4N$ entangled pairs ($8N$ pairs in total). Each~pair of particles is in the state $|\phi^+\rangle=\frac{1}{\sqrt{2}}(|0\rangle|0\rangle + |1\rangle|1\rangle)$. He sends one particle from each pair of the first set to Alice, and~from the second set to Bob, keeping the order of the pairs preserved. Let us denote the ordered set of $4N$ particles given to Alice by $\mathcal{A}$ and those given to Bob by $\mathcal{B}$. The~two ordered sets kept by Trent, each consisting of $4N$ particles entangled with particles sent to Alice and Bob, are denoted by $\mathcal{T}^{(A)}$ and $\mathcal{T}^{(B)}$, respectively. We would like to note that the use of ordered sets was previously proposed in~\cite{long:02}, later called the block transmission technique, crucial to quantum secure direct~communication.

\item[6.] According to the set of $2N$ indices sent to Bob, $\mathcal{I}^{(B)}$, Trent divides $\mathcal{T}^{(A)}$ into two ordered subsets of $2N$ particles each: $\mathcal{T}_T^{(A)}$ and $\mathcal{T}_B^{(A)}$, with~the $2N$ indices $j \in \mathcal{I}^{(B)}$ corresponding to the particles $\mathcal{T}_B^{(A)}$. Note that the original positions in $\mathcal{T}^{(A)}$ of each particle from $\mathcal{T}_T^{(A)}$ and $\mathcal{T}_B^{(A)}$ are preserved. In~other words, for~each particle from $\mathcal{T}_T^{(A)}$ and $\mathcal{T}_B^{(A)}$, Trent knows its position in $\mathcal{T}^{(A)}$ and, hence, with~which particle in $\alice$ it is entangled. The~same is done 
with particles from $\mathcal{T}^{(B)}$, entangled~with those in $\mathcal{B}$, obtaining $\mathcal{T}_T^{(B)}$ and $\mathcal{T}_A^{(B)}$.

\item[7.] Trent sends the ordered subsets $\mathcal{T}_B^{(A)}$ and $\mathcal{T}_A^{(B)}$ to Bob and Alice, respectively, each consisting of 2$N$ particles. The~particles in $\mathcal{T}_B^{(A)}$ and $\mathcal{T}_A^{(B)}$ are entangled with the corresponding particles ($2N$ of them) in the sets $\mathcal{A}$ and $\mathcal{B}$, given to Alice and Bob, respectively. Note that knowing the indices $\mathcal{I}^{(A)}$, Alice knows which particle from $\mathcal{T}_A^{(B)}$ is entangled with $2N$ of the particles in $\mathcal{B}$; and analogously for Bob.
Trent keeps the subsets $\mathcal{T}_T^{(A)}$ and $\mathcal{T}_T^{(B)}$ to himself, to~be used during the binding phase.
\end{itemize}
\begin{figure}[H]
\centering
\includegraphics[width=0.7\linewidth]{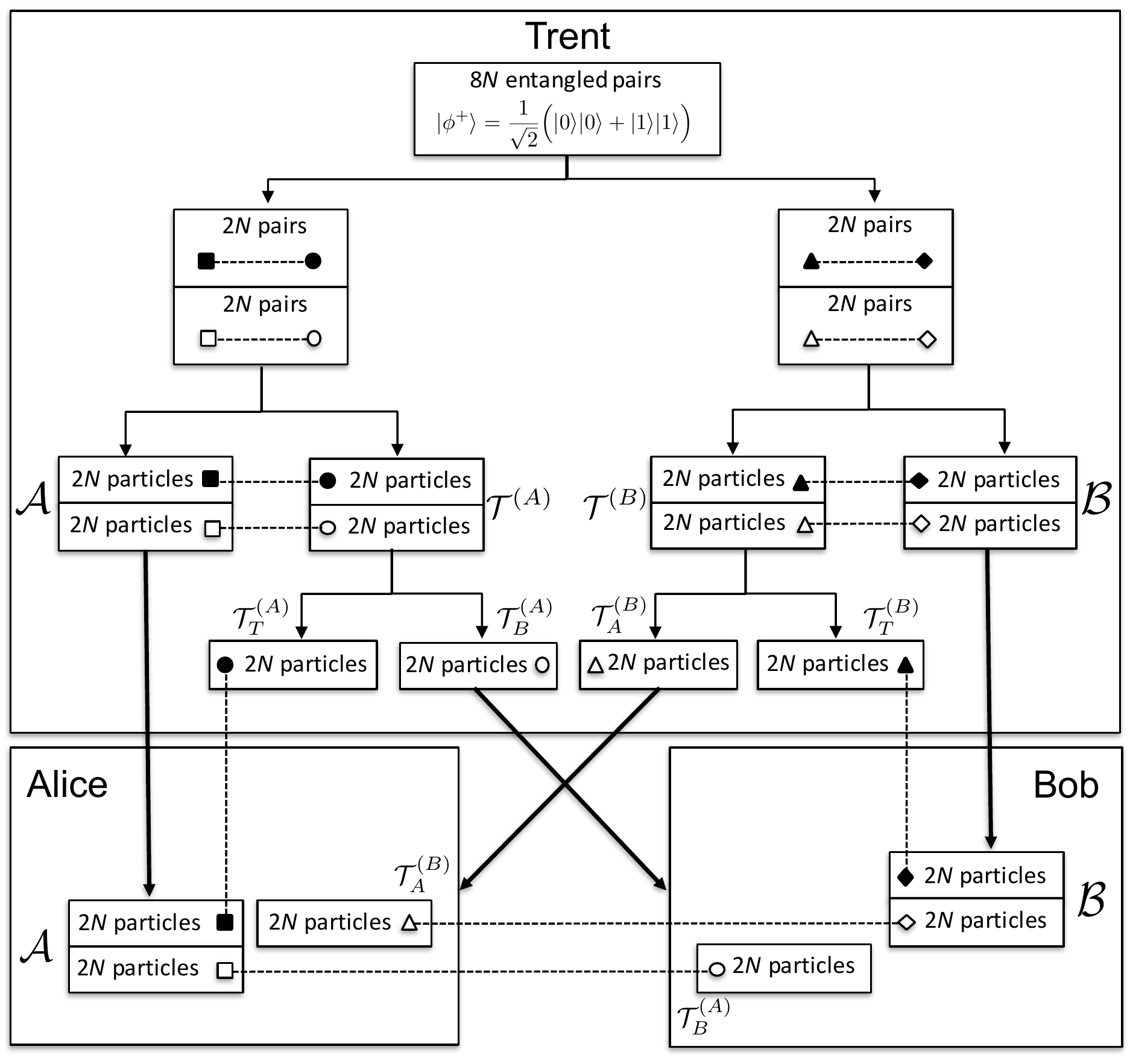}
\caption{{Initialization phase: Stage II: The thick arrows represent the transfer of particles from Trent to Alice and Bob. The~dashed connections represent the entanglement between the respective particles. The~big boxes represent the shielded private laboratories of Alice, Bob, and~Trent.}}
\label{fig:initialization}
\end{figure}

\noindent\textbf{\em The initialization phase: Stage II ends with the following:} \vspace{-2mm}
\begin{itemize}
\item Alice has an ordered set $\mathcal A$ of $4N$ particles, entangled with $2N$ particles kept by Trent, $\mathcal{T}_T^{(A)}$, and~additional $2N$ particles that are given to Bob, $\mathcal{T}_B^{(A)}$. She has another ordered set $\mathcal{T}_A^{(B)}$ of $2N$ particles, entangled with $2N$ particles given to Bob, chosen from $\mathcal B$ according to $\mathcal{I}^{(A)}$.\vspace{-2mm}

\item Bob has an ordered set $\mathcal B$ of $4N$ particles, entangled with $2N$ particles kept by Trent, $\mathcal{T}_T^{(B)}$, and~additional $2N$ particles that are given to Alice, $\mathcal{T}_A^{(B)}$. He has another ordered set $\mathcal{T}_B^{(A)}$ of $2N$ particles, entangled with $2N$ particles given to Alice, chosen from $\mathcal A$ according to $\mathcal{I}^{(B)}$.\vspace{-2mm}

\item Trent keeps two ordered sets of $2N$ particles each, $\mathcal{T}_T^{(A)}$ and $\mathcal{T}_T^{(B)}$, entangled with $2N$ particles from $\mathcal A$ and $\mathcal B$, respectively.
\end{itemize}

\vspace{6pt}
\noindent\textbf{Exchange phase:}\vspace{12pt}

\textbf{\em Parties involved:} Alice and~Bob.

\textbf{\em Input:}\vspace{-2mm}
\begin{itemize}
\item The particles and indices Alice and Bob obtained at the end of the initialization~phase.\vspace{-2mm}

\item $h(M)$, the~hash value of the contract $M$ to be signed, obtained using publicly known function $h$.\vspace{-2mm}

\item $H_{\texttt A} = h(M) \oplus k_A$ and $H_{\texttt B} = h(M) \oplus k_B$ for Alice and Bob, respectively.\vspace{-1.8mm}
\end{itemize}

\textbf{\em Communication channel:} Unauthenticated classical channel between the~clients.\vspace{12pt}

%\hl{knows} %%% We agree with the change.
The exchange phase (see Figure~\ref{fig:exchange}) consists of $4N$ rounds. In~each round, a~client, say Alice, has a particle from $\mathcal A$, on~which she measures $\hat{H}_{\texttt{A}_i}$, with~$i\in \mathcal{I}$, and~sends the results to Bob. In~addition to this, in~$2N$ rounds labeled by $j\in\mathcal{I}^{(A)}$, Alice measures $\hat{H}_{\texttt{B}_j}$ on the corresponding particles from $\mathcal{T}_A^{(B)}$. Note that Alice knows $k_{\texttt{B}_j}$, and~therefore, she knows $H_{\texttt{B}_j} = h(M) \oplus k_{\texttt{B}_j}$. Since Alice 

knows that those particles are entangled with Bob, she uses her measurement outcomes to compare them with the results received from Bob, thus checking his behavior. Bob performs his measurements analogously. These two kinds of measurements are shown in Figure~\ref{fig:measurements}.
\begin{figure}[H]
\centering
\includegraphics[width=0.78\linewidth]{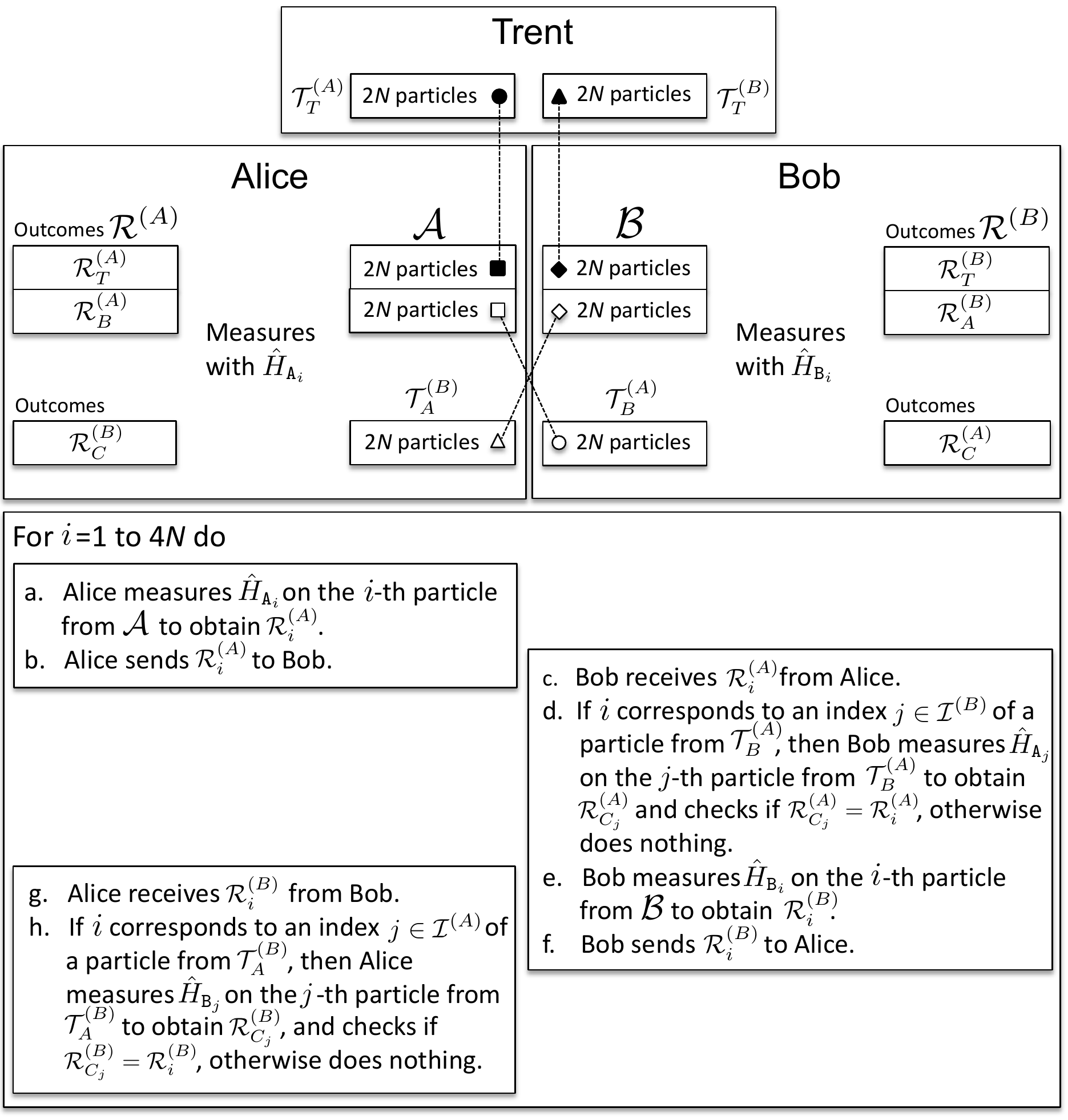}
\caption{{Exchange phase: the dashed connections represent the entanglement between the respective particles. The~steps of the protocol during the exchange phase are described.}}
%in the figures throughout, please change -th to superscript and no hyphen.%%% We would like to keep -th instead of superscript since it is common in the literature and common in the community.
\label{fig:exchange}
\end{figure}

\begin{itemize}
\item[8.] At the beginning of the exchange phase, Alice and Bob are in possession of $6N$ particles each. Alice has $4N$ particles denoted by $\mathcal A$ and $2N$ particles by $\mathcal{T}_A^{(B)}$, and~analogously for Bob. On~the $4N$ particles from $\mathcal A$, Alice measures her honest observable $\hat{H}_{\texttt{A}_i}$, with~$i = 1, \dots , 4N$. Bob measures $\hat{H}_{\texttt{B}_i}$ on his corresponding particles from $\mathcal B$. Their measurement outcomes form ordered sets of binary results $\mathcal{R}^{(A)}$ and $\mathcal{R}^{(B)},$ respectively, where:
\begin{equation}
\begin{array}{rl}
\mathcal{R}^{(A)} =& \mathcal{R}_T^{(A)} \cup \mathcal{R}_B^{(A)}, \vspace{1.7mm}\\
\mathcal{R}^{(B)} =& \mathcal{R}_T^{(B)} \cup \mathcal{R}_A^{(B)}.
\end{array}
\end{equation}
We use $\mathcal{R}_T^{(A)}$ to denote Alice's measurement results on the particles from $\alice$ ($2N$ of them) that are entangled with $\mathcal{T}_T^{(A)}$ (kept by Trent) and $\mathcal{R}_B^{(A)}$ to denote her measurement results on the rest of the particles from $\alice$ ($2N$ of them) entangled with $\mathcal{T}_{B}^{(A)}$ (given to Bob); and analogously for Bob's results, $\mathcal{R}_T^{(B)}$ and $\mathcal{R}_A^{(B)}$. They send these results to each other, one-by-one: if Alice starts first: in the $i-{\text{th}}$ step of the exchange, she sends to Bob her result $\mathcal{R}_i^{(A)}$, then Bob sends to Alice his result $\mathcal{R}_i^{(B)}$, and~so~on.
%We changed -th to superscript and no hyphen, please confirm. %%% We would like to keep -th instead of superscript since it is common in the literature and common in the community.
\begin{figure}[H]
\centering
\includegraphics[width=0.5\linewidth]{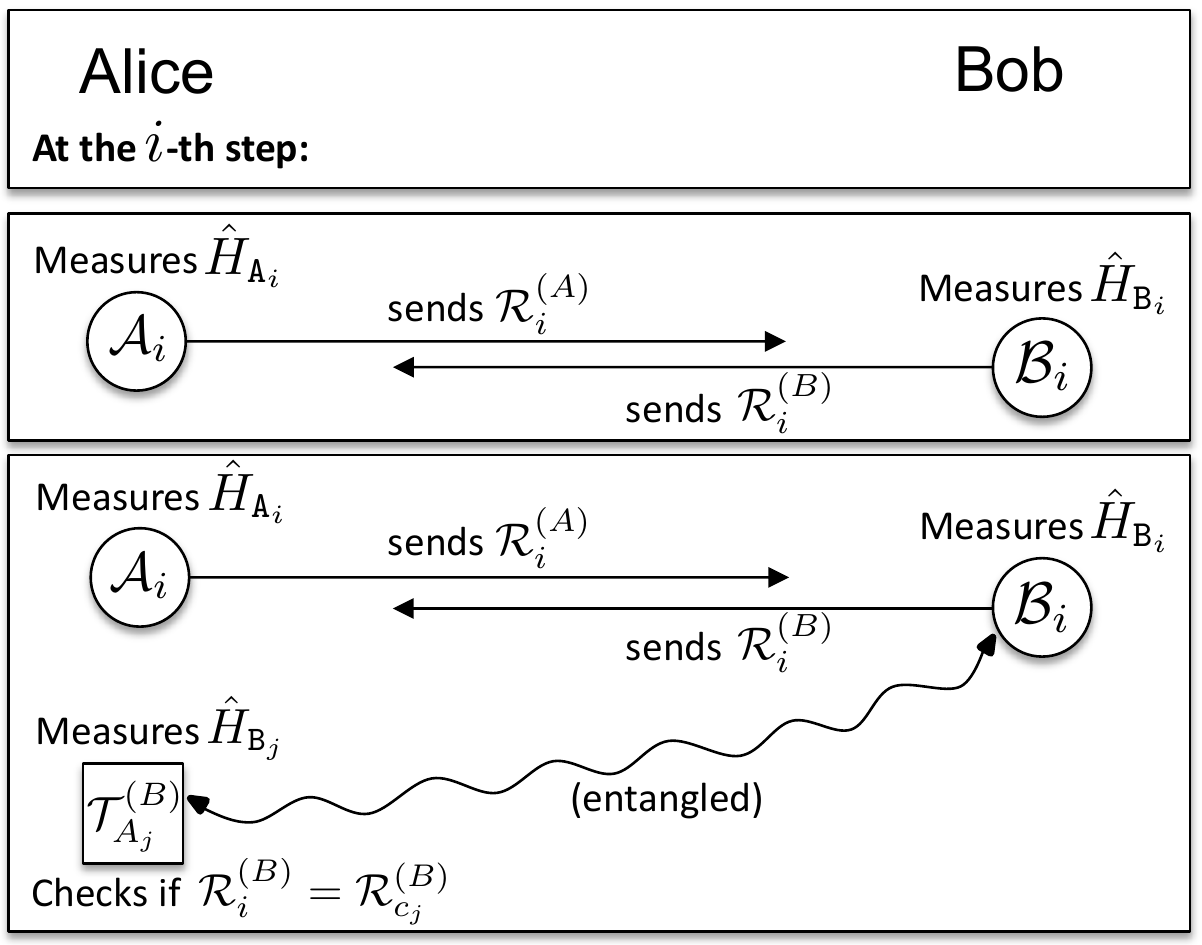}
\caption{{The two kinds of measurements a client, say, Alice, performs at a step $i$.}}
\label{fig:measurements}
\end{figure}

\item[9.] For each round $i\in \mathcal I$ for which there exists $j\in \mathcal{I}^{(A)}$, such that $i=j$, Alice measures $\hat{H}_{\texttt{A}_j}$ on the corresponding particle from $\mathcal{T}_A^{(B)}$, to~obtain $\mathcal{R}_{C_j}^{(B)}$ (see Figure~\ref{fig:measurements}). If~Bob indeed measured his honest observable $\hat{H}_{\texttt{B}_i}$, then his measurement outcome will match Alice's, $\mathcal{R}_i^{(B)}=\mathcal{R}_{C_j}^{(B)}$. In~the presence of noise, Alice applies a statistical test to verify if Bob provided enough consistent results (see Appendix~\ref{sec:honest_noisy}). In~case $i \neq j \in \mathcal{I}^{(A)}$, Alice uses Bob's result $\mathcal{R}_i^{(B)} \in \mathcal{R}_T^{(B)}$ for the optional binding phase, when Trent confronts Alice's information about $\mathcal{R}_i^{(B)}$ with his own measurement outcomes. The~same is done by Bob upon receiving $\mathcal{R}_i^{(A)}$ from Alice. Then: (i) if all measurement outcomes, $\mathcal{R}_C^{(A/B)}$ ($2N$ of them for each client), are found to be consistent by the end of the communication at step $4N$, both clients will, during~the binding phase, obtain with certainty the certified document from Trent that allows them to acquire a signed contract from the authorities (see the description below of the binding phase); (ii) if one of the clients suspects dishonest behavior, the~communication is stopped, and~they measure their honest observables on all remaining qubits and proceed to the binding phase.
\end{itemize}

\noindent\textbf{\em The exchange phase ends with the following:} If no cheating occurred, Alice and Bob both obtain their own, as~well as each others' results, $\mathcal{R}^{(A)}$ and $\mathcal{R}^{(B)}$. In~case the communication was interrupted at step $m$, a~client, say, Alice, ends up with all of her own results $\mathcal{R}^{(A)}$ and those received from Bob by the step $m$ (note that those do not necessarily need to be obtained by actually performing measurements on~qubits).

\vspace{12pt}
\noindent\textbf{Binding phase:}
\vspace{12pt}

\textbf{\em Parties involved:} Trent and a client, say, Alice.

\textbf{\em Input:} \vspace{-2.5mm}
\begin{itemize}
\item The sets $\mathcal{T}_T^{(A)}$ and $\mathcal{T}_T^{(B)}$ of particles kept by Trent.\vspace{-2mm}
\item The sets of Alice's measurement results, $\mathcal{R}^{(A)}$, and~those sent to Alice by Bob, $\mathcal{R}^{(B)}$. Note that in the case of Bob's cheating, $\mathcal{R}^{(B)}$ might contain the wrong values, and~in case the communication was interrupted at step $m$, it is only a partial set of results. For~simplicity, we use the same symbol $\mathcal{R}^{(B)}$ for both sets of ``honest'', as~well as ``dishonest'' results.\vspace{-2mm}
\item $h^\ast=h(M)$, the~hash value of the contract $M$ to be signed, obtained using publicly-known function~$h$.\vspace{-2mm}
\item A publicly-known distribution $p(\alpha)$ to choose the acceptance rate $\alpha$.\vspace{-1.8mm}
\end{itemize}

\textbf{\em Communication channel:} Private classical channel or in~person.\vspace{12pt}

\begin{figure}[H]
\centering
\includegraphics[width=0.67\linewidth]{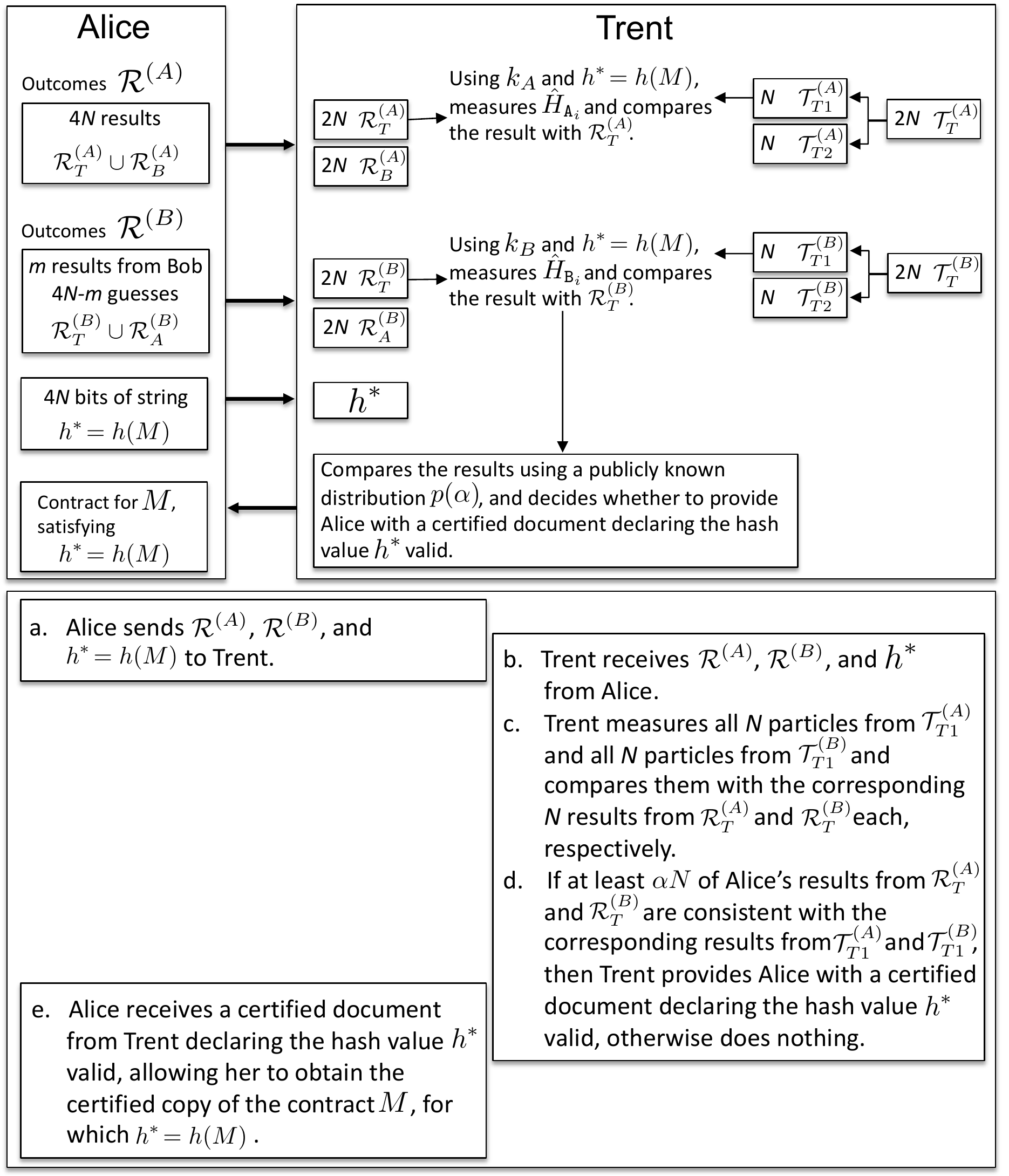}
\caption{{Binding phase: the thick arrows represent the transfer of measurement outcomes from Alice to Trent. The~steps for the binding phase are described.}}
\label{fig:binding}
\end{figure}
The binding phase (see Figure~\ref{fig:binding}) consists of the following step:%the figure interrupts the paragraph; please check the~layout %%% We moved the sentence in order to avoid the issue raised. 
\begin{itemize}
 %Please point out that it refers to Appendix A or B. %%% We disagree with this change and this is related with the question of organization of the Appendix. We think that the way it was previously the designation of Appendix was better. We think that the there should be only one Appendix with two sections, as they correspond to the analysis of the protocol in case where the protocol was interrupted in two different scenarios (section A and B in our submitted version). Indeed, the first part of the Appendix is introduction to both Appendix A and Appendix B (it even cites Appendix B), and is thus unnatural to make it a part of Appendix A only.
\item[10.] During the binding phase, a~{single} client, say Alice, presents her results to Trent in order to bind the contract, to~receive a certified document, signed by Trent's public key, declaring valid the hash value $h^\ast = h(M)$. Having such a certificate, Alice can appeal to the authorities to enforce the terms of the contract~$M$: she presents the contract $M$ and the signed document declaring the value $h^\ast$ valid, so that the authorities can verify that indeed $h^\ast = h(M)$ (note that the function $h$ is publicly known). As~pointed out in the Introduction, Trent is an agency accredited by the State (e.g., public~notaries). To~bind the contract, Alice provides the string $h^\ast = h(M)$ to Trent and presents her results $\mathcal{R}^{(A)}$, as~well as those obtained from Bob, $\mathcal{R}^{(B)}$ (if the protocol was interrupted before its completion, Alice will guess the rest of Bob's outcomes; see Appendix for details). Knowing $h^*$, Trent computes $H_{\texttt{A}_j} = h^\ast \oplus k_{\texttt{A}_j}$ and $H_{\texttt{B}_j} = h^\ast \oplus k_{\texttt{B}_j}$. Trent thus measures the honest observables $\hat{H}_{\texttt{A}_i}$ on $N$ randomly-chosen qubits from the subset $\mathcal{T}_T^{(A)}$ and $\hat{H}_{\texttt{B}_i}$ on another $N$ randomly-chosen qubits from $\mathcal{T}_T^{(B)}$ (the other $2N$ particles are kept for binding the contract for Bob, if~requested). He also chooses independently at random $\alpha > 1/2$, according to some publicly-known distribution $p(\alpha)$. Trent will give to Alice a certified document declaring the hash value $h^\ast$ valid ({``bind the contract for Alice''}) if the results $\mathcal{R}^{(A)}$ and $\mathcal{R}^{(B)}$ satisfy the following two~conditions:
\begin{itemize}%[align=parleft,leftmargin=*,labelsep=5.5mm]
\item[10a.] at least a fraction $\alpha$ of $N$ Alice's results from $\mathcal{R}_T^{(A)}$ is equal to Trent's results on the corresponding entangled $N$ particles from $\mathcal{T}_T^{(A)}$, and
\item[10b.] at least a fraction $\alpha$ of $N$ Bob's results from $\mathcal{R}_T^{(B)}$ is equal to Trent's results on the corresponding entangled $N$ particles from $\mathcal{T}_T^{(B)}$.
\end{itemize}
\end{itemize}

\noindent\textbf{\em The binding phase ends with the following:} In case Trent finds Alice's results consistent with his, she receives an authorized document from him declaring the hash value $h^\ast$ valid, which then allows her to obtain the certified copy of the contract $M$, for~which $h^\ast = h(M)$.

\section{Security~Analysis}\label{sec:security}
%%% We agree with the change in this paragraph
In our protocol, a~cryptographic hash function $h$ is used to map contract $M$ to a bit string of fixed size $4N=|h(M)|$. Had Trent possessed an infinite computational power, he would be able to find the collisions, among~which one message would be the contract $M$. Nevertheless, the~problem of finding collisions for existing cryptographic hash functions (such as SHA256 and others) is not based on any mathematical or number theoretical conjecture, such as the hardness of factoring, but~on the fact that the hashing function is highly irregular and non-linear. Its security is at the same level of symmetric cryptography (such as AES), which is assumed to be beyond the capacity of quantum technologies to attack, and~moreover, AES is actually used in current commercial QKD services. Furthermore, note~that at the time this paper was written, it was not yet found a single collision for SHA256 of two meaningless texts, and~so, it is unforeseeable to find collisions for a given fixed text. Google used more than $9\times 10^{18}$ hashes to find a meaningless collision of SHA1~\cite{sha1}, and~SHA256 is considerably harder than SHA1. Finally, it is worthwhile noticing that the assumption that there exists an unbreakable hash function, the~so-called {random oracle model}, is quite common, even when quantum information and computation is available~\cite{QROM}. In~addition, having such computational power would also allow a cheating client (say, Bob) to find collisions as well, thus potentially giving him the opportunity to bind a different contract $M^\prime$, for~which $h(M^\prime) = h(M)$. Nevertheless, given a particular hash function $h$, it is negligible that other collisions different from the contract $M$ would still represent meaningful contracts, let alone contracts that would be favorable to~Bob.

Let us define the following probabilities for Alice to pass the above tests (a) and (b), in~case the communication was interrupted at step $m$,
\begin{eqnarray*}
P_{BTH}(m;\alpha) &-& \mbox{Probability that Bob passes Trent's test on his own qubits},\\
P_{BTA}(m;\alpha) &-& \mbox{Probability that Bob passes Trent's test on Alice's qubits},
\end{eqnarray*}
$P_{ATH}(m;\alpha)$ and $P_{ATB}(m;\alpha)$ can be analogously defined as Alice's probabilities to pass Tests (a) and (b). Additionally, we define Bob's probability to pass Alice's verification test on the results $\mathcal{R}_i^{(A)} \in \mathcal{R}_B^{(A)}$ (see Step (5) of Section~\ref{sec:setup}) received from Alice by step $m$, as:
\begin{eqnarray*}
P_{BAS}(m) &-& \mbox{Probability that Bob passes Alice's test on their shared qubits,}
\end{eqnarray*}
and analogously $P_{ABS}(m)$ for~Alice.

%%% We agree with the change in this paragraph
It is easy to verify that, in~the noiseless scenario, the~protocol is {optimistic}. 
If~both clients follow the protocol honestly till the end, both of them are able to enforce the contract: Alice will have all the consistent results for her own, as~well as Bob's qubits, allowing her to bind the contract with probability one (the same happens for Bob).

%%% We agree with the change in this paragraph
To analyze the probabilistic fairness quantitatively, we introduce the so-called {probability to cheat}, along the lines of the similar quantity analyzed in~\cite{paun:2011}. By~$P^\bob_{bind}(m;\alpha) = P_{BTH}(m;\alpha)\cdot P_{BTA}(m;\alpha)$, we~denote the probability that Bob passes Trent's tests and can thus bind the contract, if~the communication is interrupted at step $m$ of the protocol, for~a given choice of $\alpha$; and analogously for Alice. To~reach step $m$, both clients have to pass each others' verification, which is given by the probability $P(m) = P_{BAS}(m)\cdot P_{ABS}(m)$. Bob's probability to cheat at step $m$, for~a given $\alpha$, is defined as the probability that he can bind the contract, while Alice cannot, multiplied by the probability to reach step $m$:
\begin{equation}\label{eq:cheat1}
P_{ch}^\bob (m;\alpha)=P(m)\cdot P^\bob_{bind}(m;\alpha)\left[1-P^\alice_{bind}(m;\alpha)\right].
\end{equation}

%%% We agree with the change in this paragraph
Note that the above probabilities also depend on the particular distribution of entangled pairs, denoted as ``configuration $\mathcal L$'', given by probability $q(\mathcal L)$, and~in the case of a dishonest client, the~cheating strategy. Furthermore, both the above, as~well as {any} probability evaluated (with the exception of $p(\alpha)$) depend on $N$; therefore, we omit writing it, as~it is implicitly assumed. Nevertheless, the~dependence on configuration $\mathcal L$ is relevant in calculations, and~below, we analyze it in~detail.

As prescribed by the protocol, Trent gives $6N$ qubits to Alice: $4N$ qubits from $\alice$ and $2N$ from $\mathcal T^{(B)}_A$ (see Figure~\ref{fig:honest}), together with their relative positions. Analogously, Bob receives $6N$ qubits from $\bob$ and $\mathcal T^{(A)}_B$. We assume that all the classical communications between Trent and clients are private and authenticated, based on, say, pre-shared symmetric key schemes. After~the communication has stopped at step $m$, out of the $4N$ qubits to measure from $\alice$ and $\bob$, Alice and Bob will be left with $\ell^{(A)}$ and $\ell^{(B)}$ unmeasured qubits, respectively, with~$\ell^{(A)} = \ell^{(B)} = 4N - m$. Note that, among~the $\ell^{(A)}$ and $\ell^{(B)}$ qubits, not all will be used by Trent to bind the contract for Alice and Bob. In~fact, the~qubits that are entangled between Alice and Bob are irrelevant for their binding probabilities. They are used by the parties to check each others' honesty. Let then $\ell^{(A)}$ be decomposed into $\ell_{T}^{(A)}$ and $\ell_{B}^{(A)}$, the~qubits entangled with those held by Trent and by Bob, respectively. Analogously, let $\ell^{(B)}$ be decomposed into $\ell_{T}^{(B)}$ and $\ell_{A}^{(B)}$. Therefore:
\begin{equation}
\begin{array}{rrclcl}
\mbox{Alice } : & (2N+2N)-m & = & \ell^{(A)} & = & \ell_{T}^{(A)} + \ell_{B}^{(A)} \vspace{1.7mm}\\
\mbox{Bob } : & (2N+2N)-m & = & \ell^{(B)} & = & \ell_{T}^{(B)} + \ell_{A}^{(B)}.
\end{array}
\end{equation}

\begin{figure}[H]
\centering
\includegraphics[width=0.6\linewidth]{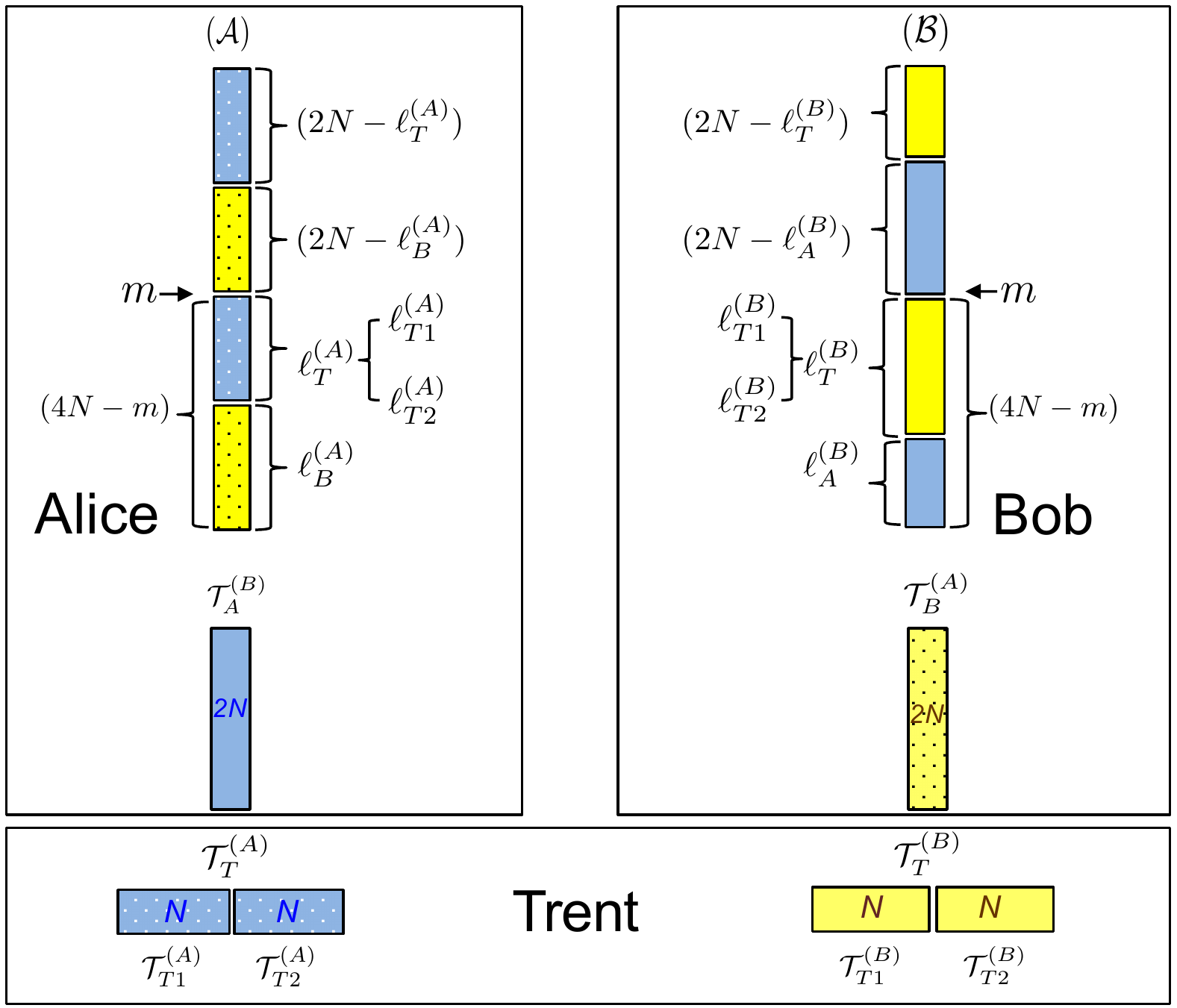}\\
\caption{{Decomposition of $\ell^{(A)}$ into $\ell_{T}^{(A)}$ and $\ell_{B}^{(A)}$ and decomposition of $\ell^{(B)}$ into $\ell_{T}^{(B)}$ and~$\ell_{A}^{(B)}$. The~same~colors~and~patterns~represent~entanglement~between~the~respective~particles~of~different~parties.}}
\label{fig:honest}
\end{figure}

%%% We agree with the change in this paragraph.
In order to bind the contract, a~client, say Bob, has to present his own measurement results, as~well as those obtained from Alice. Then, Trent checks if they are correlated with those obtained on qubits in his possession. Unlike the previous proposal~\cite{paun:2011}, in~which {both} clients had to be present and show their results to Trent in order to both obtain signed contracts during the binding phase, in~the current protocol, Bob does not need Alice to be summoned in order to bind the contract (and~vice~versa). 
Since~the protocol should be symmetric to both clients, it should allow that they both, {separately}, are~able to bind the contract. For~this reason, when binding the contract to, say Bob, Trent does not check all of his results from $\mathcal T_T^{(A)}$ and $\mathcal T_T^{(B)}$ for qubits entangled with $\alice$ and $\bob$, respectively. Note~that to check Bob's results, Trent has to measure the honest observables $\hat{H}_{\texttt{A}_i}$ and $\hat{H}_{\texttt{B}_i}$ on his qubits, he~obtains using the $h^\ast=h(M)$ provided by Bob. Therefore, if~both clients were using the same sets of qubits (entangled with those in Trent's possession) to bind the contract separately, a~dishonest Bob would have a trivial successful cheating strategy. He measures his honest observables given by the mutually-agreed contract $M$, which allows him to bind that contract. Nevertheless, in~case he later decides not to comply with it, he simply provides Trent with a random $h'\neq h(M)$. As~a consequence, Trent's results will be uncorrelated with {both} Bob's, {as well as} Alice's results, i.e.,~neither client would be able to bind the contract $M$. This is precisely the reason for checking only $N$ out of $2N$ qubits from $\mathcal T_T^{(A)}$ and $\mathcal T_T^{(B)}$,~each.

Thus, Trent's qubits are each divided into two equal subsets of the same size, $\mathcal T_T^{(A)} = \mathcal T_{T1}^{(A)} \cup \mathcal T_{T2}^{(A)}$ and $\mathcal T_T^{(B)} = \mathcal T_{T1}^{(B)} \cup \mathcal T_{T2}^{(B)}$: the sets with the $T1$ subscript are used for binding the contract to Alice, while~those with $T2$ for Bob. Consequently, we have $\ell_{T}^{(A)} = \ell_{T1}^{(A)} + \ell_{T2}^{(A)}$ and $\ell_{T}^{(B)} = \ell_{T1}^{(B)} + \ell_{T2}^{(B)}$ (see~Figure~\ref{fig:honest}).

The overall configuration $\mathcal L$ of the entangled pairs distributed between Alice, Bob, and~Trent is given by six numbers,
\begin{equation}
\begin{array}{ll}
& \ \ \ \ \! \ \ \mathcal L = \left\{ \ell_{B}^{(A)}\!\!\!, \; \ell_{T1}^{(A)}\!, \; \ell_{T2}^{(A)}\!, \; \ell_{A}^{(B)}\!\!\!, \; \ell_{T1}^{(B)}\!, \; \ell_{T2}^{(B)} \right\},\vspace{2mm}\\ 
\mbox{with} \ \ \ \ \ \ \ \ &q(\mathcal L ) \leq q(\ell_{B}^{(A)})\! \cdot q(\ell_{T1}^{(A)})\! \cdot q(\ell_{T2}^{(A)})\! \cdot q(\ell_{A}^{(B)})\! \cdot q(\ell_{T1}^{(B)})\! \cdot q(\ell_{T2}^{(B)}).
\end{array}
\end{equation}

Therefore, Bob's probability to cheat, given by Equation~\eqref{eq:cheat1}, now written with the explicit dependence on the configuration $\mathcal L$, is:
\begin{equation}
\label{eq:cheat}
P_{ch}^\bob (m;\alpha|\mathcal L) = P(m|\ell_{B}^{(A)}\!\!\!,\ell_{A}^{(B)})\cdot P^\bob_{bind}(m;\alpha|\ell_{T2}^{(A)}\!,\ell_{T2}^{(B)}) \cdot \left[1-P^\alice_{bind}(m;\alpha|\ell_{T1}^{(A)}\!,\ell_{T1}^{(B)})\right].
\end{equation}

Averaging the ``constituent'' probabilities from the above equation over their respective configurations from $\mathcal L$ gives:
\begin{equation}
\begin{array}{ll}\label{eq:Pm}
&\quad \quad \ P(m) = \langle P_{ABS}(m|\ell_{B}^{(A)}) \rangle_{\ell^{(A)}_B}\! \cdot \langle P_{BAS}(m|\ell_{A}^{(B)}) \rangle_{\ell^{(B)}_A}, \vspace{1.7mm}\\
&P^\bob_{bind}(m;\alpha) = \langle P_{BTH}\!(m;\alpha|\ell_{T2}^{(B)}) \rangle_{\! \ell^{(B)}_{T2}}\! \cdot \langle P_{BTA}\!(m; \alpha|\ell_{T2}^{(A)}) \rangle_{\!\ell^{(A)}_{T2}},\vspace{1.7mm}\\
&P^\alice_{bind}(m;\alpha) = \langle P_{\!ATH}(m;\alpha|\ell_{T1}^{(A)}) \rangle_{\! \ell^{(A)}_{T1}}\! \cdot \langle P_{\!ATB}(m;\alpha|\ell_{T1}^{(B)}) \rangle_{\!\ell^{(B)}_{T1}},
\end{array}
\end{equation}
where $\langle A \rangle_\ell$ represents the expectation value of $A(\ell)$ over the values of $\ell$. To~simplify notation, in~the following, we will use $P_{ABS}(m) = \langle P_{ABS}(m|\ell_{B}^{(A)}) \rangle_{\ell^{(A)}_B}$ and $P_{BTH}(m;\alpha) = \langle P_{BTH}(m;\alpha|\ell_{T2}^{(B)}) \rangle_{\ell^{(B)}_{T2}}$, and~analogously for the other four probabilities from the right-hand sides of the above three~equations.

Hence, with~Bob's probability to cheat, averaged over all configurations $\mathcal L$,
\begin{eqnarray}
\label{cheat}
\!\!\!\!P_{ch}^\bob (m;\alpha)&=&\sum_{\mathcal L}q(\mathcal L) \;P_{ch}^\bob (m;\alpha|\mathcal L) \nonumber \\
&\leq&P(m)\cdot P^\bob_{bind}(m;\alpha) \left[1-P^\alice_{bind}(m;\alpha)\right],
\end{eqnarray}
we have the expected probability to cheat as:
\begin{equation}
\label{eq:final_cheat}
\bar{P}^\bob_{ch}(m) = \int p(\alpha)\: P_{ch}^\bob (m;\alpha)\: \mbox d \alpha .
\end{equation}

%%% We agree with the change in this paragraph.
For honest clients that follow the protocol, the~above probability is determined by the steps prescribed by the protocol (the ``honest strategy''). In~case a client, say Bob, does not follow the protocol, the~above probability depends on the ``cheating strategy'' of a dishonest client. It turns out (see Appendix~\ref{sec:dishonest_noisy}) that the quantum part of the honest and the {optimal} cheating strategies is the same, i.e.,~the (quantum) measurements performed by a cheating Bob are the same as that of an honest one, given by his honest observables $\hat{H}_{\texttt{B}_i}$. In~other words, the~best a cheating Bob can do is to send to Alice the wrong results determined by a frequency $f$. This is a consequence of the fact that Bob does not know which of the qubits given to him are used to bind the contract by Trent and which to check his honesty by Alice (for details, see Appendix~\ref{sec:dishonest_noisy}).

In Appendix~\ref{sec:soundness}, we derive the explicit expressions for the expected probability to cheat~\eqref{eq:final_cheat} for honest clients that follow the protocol, in~the ideal noiseless case (Appendix~\ref{sec:honest_noiseless}), as~well as for noisy environments (Appendix~\ref{sec:honest_noisy}), thus showing the soundness of the protocol. In Figures \ref{fig:plots}a,b,d,e, we present the values of the maximal expected probability to cheat as a function of the total number of photons for the values of $4N$ up to 6000. In~both cases (as well as for the case of a cheating client discussed below), the~results are obtained for the uniform $p(\alpha)$ on the intervals $[0.9,0.99]$, $[0.8,0.99]$, and~$[0.7,0.99]$, and~with a noise parameter $\kappa = 0.05$. The~observed dependence $\max_m \bar{P}_{ch}^{\alice / \bob} (m) \propto N^{-1/2}$ is confirmed by the proof of the asymptotic behavior, $P_{ch}(m;\alpha) \in \mathcal O (N^{-1/2})$ (see Theorem~\ref{theorem} from Appendix~\ref{sec:honest_noiseless}). 
\begin{figure}[H]
\centering
\subfloat[Honest Noiseless]{\includegraphics[width=5cm]{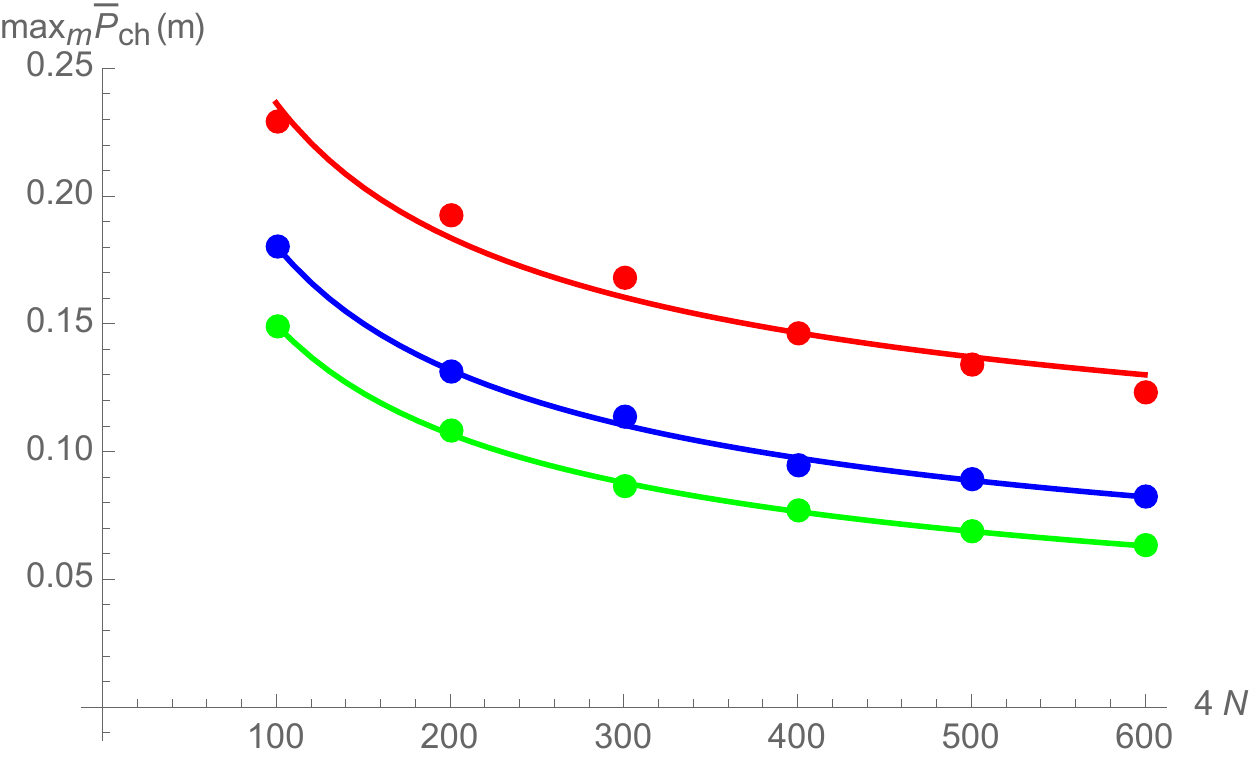}} 
\subfloat[Honest Noisy]{\includegraphics[width=5cm]{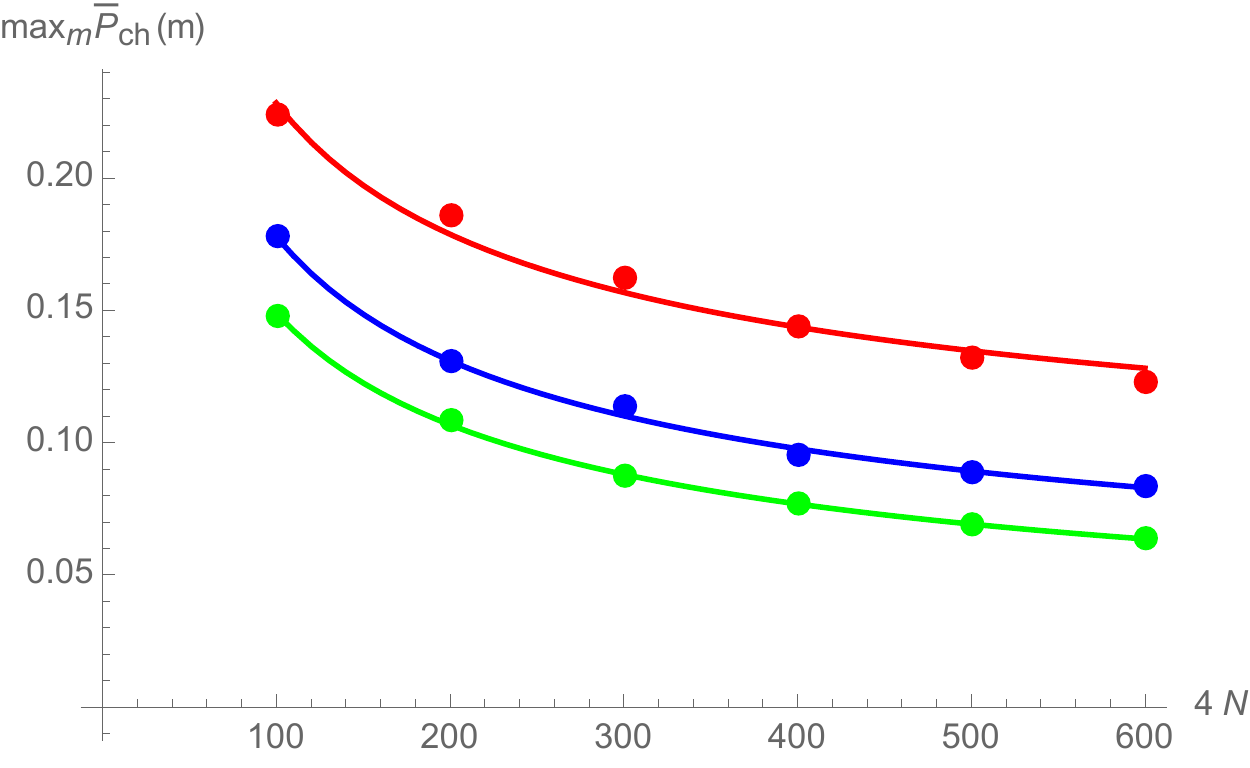}}
\subfloat[Dishonest Noisy]{\includegraphics[width=5cm]{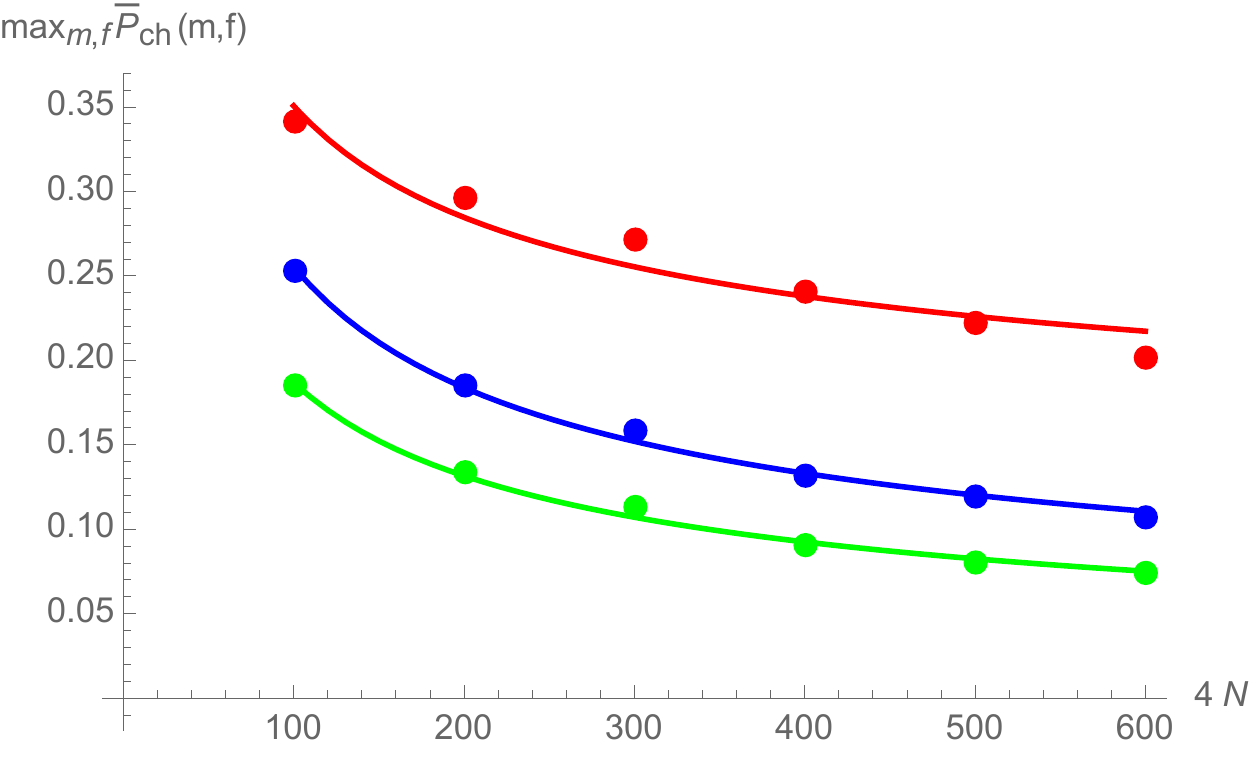}}\\
\subfloat[Honest Noiseless]{\includegraphics[width=5cm]{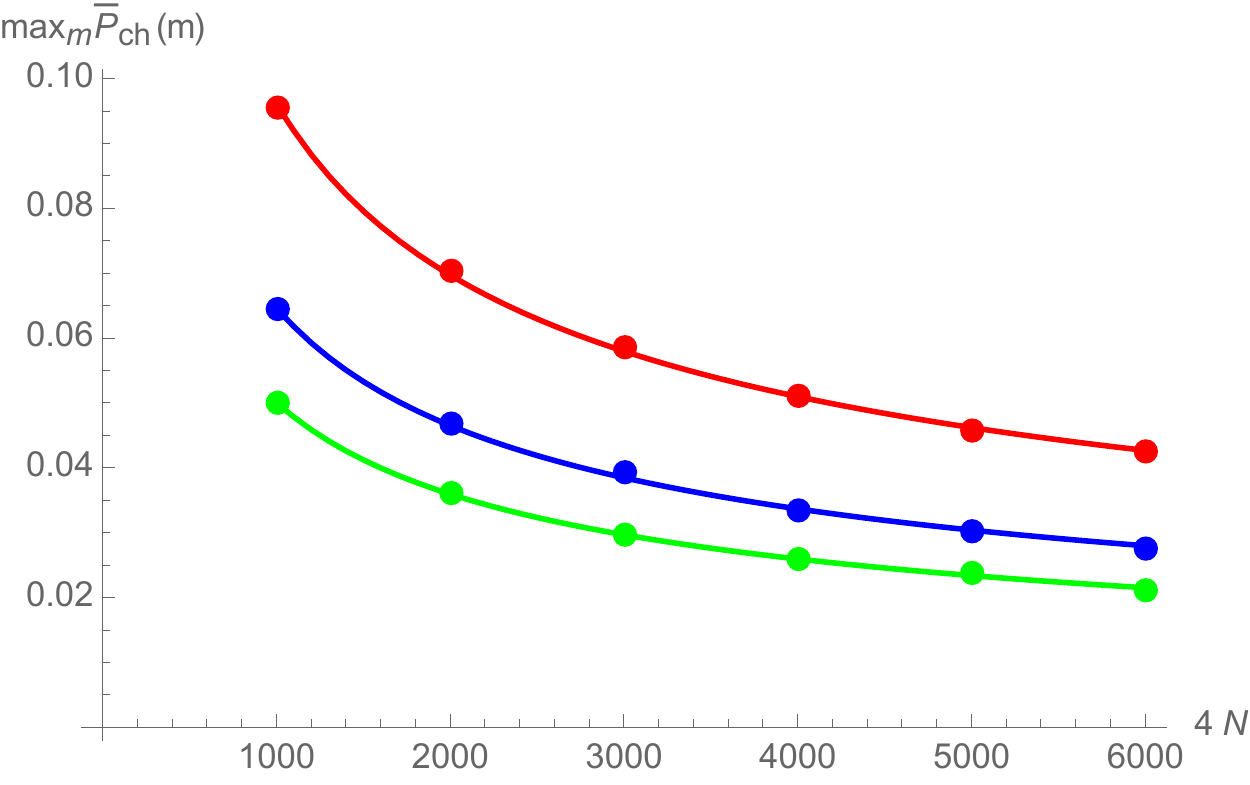}}
\subfloat[Honest Noisy]{\includegraphics[width=5cm]{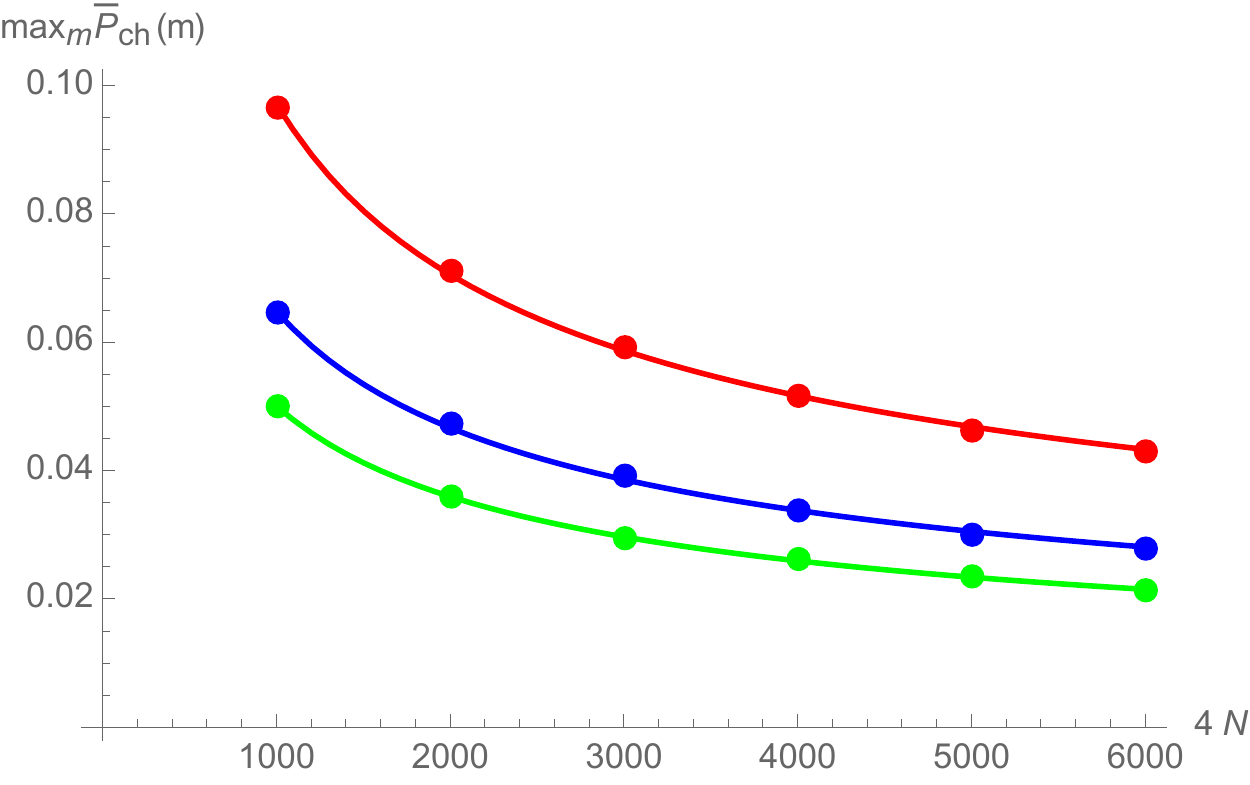}}
\subfloat[Dishonest Noisy]{\includegraphics[width=5cm]{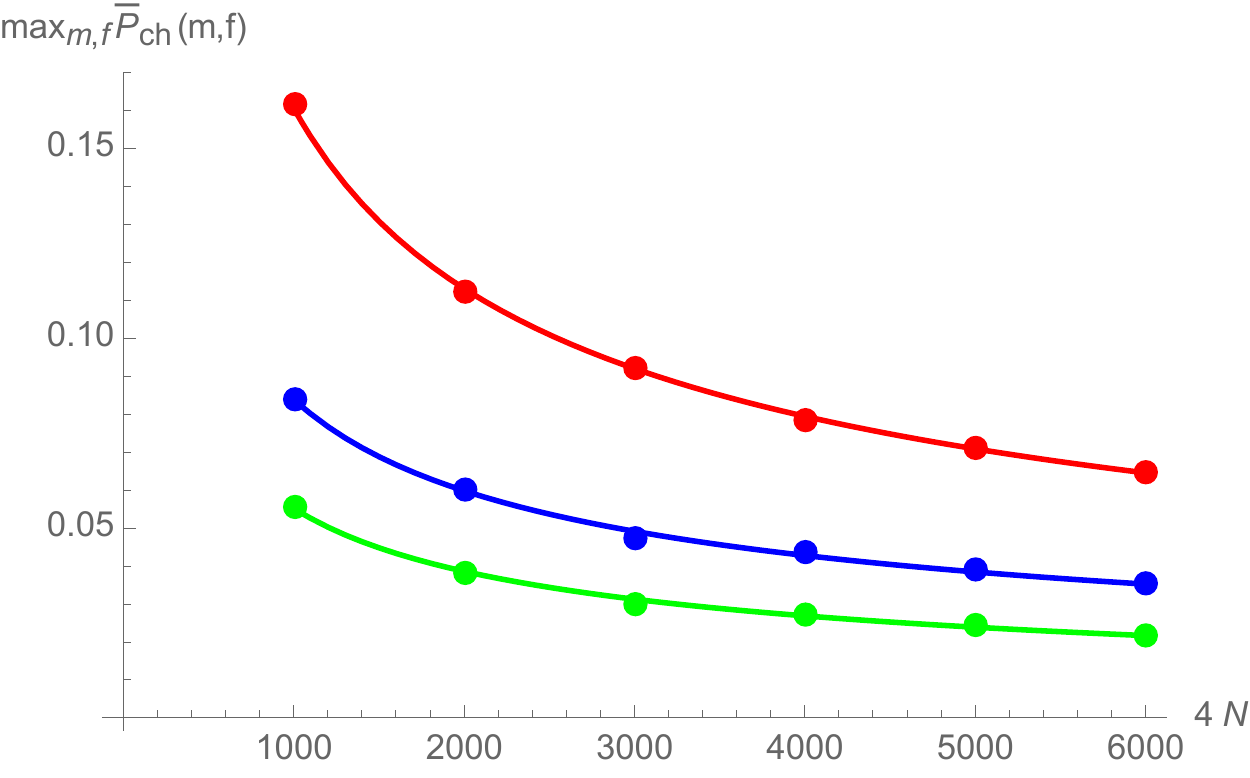}}
\caption{{(\textbf{a},\textbf{d}) correspond to the honest noiseless scenario, where the maximal expected probability to cheat, $\max_{m} \bar{P}_{ch}(m)$, is plotted against the total number of rounds between Alice and Bob, $4N$. (\textbf{b},\textbf{e})~correspond to the case of honest clients in a noisy channel with a noise parameter of $\kappa = 0.05$. (\textbf{c},\textbf{f})~correspond to the realistic case of a dishonest client with noisy channels, where the optimal cheating strategy depends on a parameter $f$. In~all three cases, the~red, blue, and~green curves are obtained for the uniform $p(\alpha)$ on the intervals $[0.9,0.99]$, $[0.8,0.99]$, and~$[0.7,0.99]$, respectively.}}
\label{fig:plots}
\end{figure}
In Appendix~\ref{sec:dishonest_noisy}, we evaluated the corresponding probabilities for the case of a cheating client who deviates from the protocol, in~the presence of noise. In~Figure~\ref{fig:plots}c,f, we plot the values of the maximal expected probability to cheat against $4N$, for~the optimal cheating strategy, showing the same dependence $\propto N^{-1/2}$. Further, in~Appendices~\ref{sec:soundness} and~\ref{sec:dishonest_noisy}, we analyze the decrease of the expected probabilities to cheat in case the cheating strategy deviates from the optimal values of $m$ or~$f$.

The results presented in Figure~\ref{fig:plots} are obtained for a fixed value of the noise parameter $\kappa$. By~increasing the noise, more and more ``wrong'' results are going to be obtained, such that even honest participants will either interrupt the communication during the exchange phase or will not be able to bind the contract with Trent. The~figure of merit here is the final average probability to bind obtained for $m=4N$, given by Equations~\eqref{eq:BTHBTA}--\eqref{eq:BTA} from Appendix~\ref{sec:honest_noisy} (note that, if~upon exchanging all the messages, clients have high enough probability to bind the contract, then they would also be able to pass each others' tests during the whole exchange phase with equally high probability). 

Thus, for~a fixed number of rounds, $4N$, range of $\alpha$, and~its probability distribution $p(\alpha)$, one can straightforwardly obtain the threshold values for the noise parameter $\kappa$, both for the honest noisy, as~well as for dishonest noisy cases (note that such threshold values depend on predetermined security level, i.e.,~the lower bound for the binding probability set by the users).

%%% We agree with the change in this paragraph.
While such quantitative numbers can straightforwardly be obtained using the analysis presented in the paper, they would not be very informative. Namely, what one should do is to, {given the noise level} (given $\kappa$), determined by the actual implementation setup, optimize the rest of the relevant parameters ($N, p(\alpha)$ and its range). While conceptually, this is possible to do, it is clearly exceptionally demanding regarding the computational resources (note that in our analysis, we probed only three $\alpha$ ranges and only the simplest uniform distribution). Our paper is more of a proof of a concept, rather than the final analysis, which, as~mentioned, is strongly implementation~dependent.

Therefore, considering our limited computational power and the fact that the presented threshold values for $\kappa$ would probably differ from the ones to be obtained by optimizing the parameters of actual implementations, we decided to omit such numerical analysis. Note that the above discussion also applies for the cheating probability: to obtain the optimized cheating probability levels, one should vary all the relevant parameters. Nevertheless, while it is obvious that the binding probability will decrease as the noise increases, it was not at all obvious that it is even possible to establish upper bounds for the cheating probability, such that it can be made arbitrarily~low.

The techniques presented in our paper use the ``brute force'' numerical approach in obtaining the final quantitative results (with the exception of our analytic proof of the asymptotic behavior given in Theorem~\ref{theorem} from Appendix~\ref{sec:honest_noiseless}), which do not allow for drawing qualitative insights. Developing more closed analytic expressions for the final binding and cheating probabilities that can be analyzed beyond the final numerical values would be an interesting topic of future~research.

\section{Discussion} \label{sec:discussion}
%define if appropriate %%% We believe that it is not necessary as they are well known abbreviations in the community. 
Like the previous version of the quantum contract-signing protocol~\cite{paun:2011}, the~proposed protocol relies on long-term stable quantum memories, namely keeping entangled pairs until the binding phase. The~practical problem of long-term stable quantum memories can be overcome by a simple modification: instead of {EPR} pairs, Trent sends (over an authenticated quantum channel) two ordered sets of photons in pure BB84 states, one for Alice and the other for Bob. As~soon as the clients receive their particles, they measure on them the honest observables according to $h(M)$. Thus, all~the information kept by the parties is classical and can be used to check the agents' behavior and honesty, similarly as in the protocol we proposed. This approach goes along the lines of reducing the security of BB84 to the {E91} entangled QKD protocol. The~details of such a reduction are a matter of a separate~study.

Contract signing is an important and wide-spread cryptographic protocol, as~performing transactions over the Internet is an important part of today's society. Nevertheless, all such current classical implementations fully rely on the use of a third trusted party. Moreover, if~buying directly from, say, Amazon, the~trust is handed over to a signing party, Amazon itself. Within~classical approaches, there are several service providers and applications that can mediate the contract signing process, such as EverSign, HelloSign, and~DocuSign.

On the other hand, if~the parties do not want to rely fully on a third party to exchange the signature between them, classical solutions, such as~\cite{even:83,gold:84,even:85,asokan:97,asokan:98,ben:90,rabin:83}, mentioned in the Introduction, require the gradual exchange of signed parts of the message. Therefore, classical digital contract signing is a very demanding application from the point of view of communication and computation: the~exchange process has a significant number of rounds, and~each exchanged message has to be digitally signed using (computationally-demanding) public-key cryptographic systems and infrastructures. We~refer the reader to the following survey~\cite{rak:14} for a more detailed description of these protocols. As~a consequence, such classical contract signing applications are, due to their current inefficiency, to~the best knowledge of the authors, not present on the market. We note that this is precisely the reason classical solutions to secure multiparty computation privacy protocols are not widely available as well. Potentially fast exchange of (single or coherent) photon pulses might be one of the main advantages of quantum solutions to the mentioned cryptographic problems. Note that by now, it was the higher security levels that promoted quantum over classical cryptography. This novel feature, providing practical schemes not even available classically, might potentially be shown to be~significant.

\section{Conclusions} \label{sec:conclusion}

We presented a quantum protocol for signing contracts. We showed that, under~the realistic assumptions of noise and measurement errors, the~protocol was fair, and~consequently optimistic as well. In~particular, the~maximal probability to cheat can be made arbitrarily low, as~it scaled as $1/\sqrt{N}$, where $4N$ was the total number of rounds of the protocol. We also showed that our protocol was robust against noise. Indeed, even in a dishonest noisy case, the~probability to cheat could be as low as $0.022$, for~$4N=6000$, the~noise parameter $\kappa = 0.05$, and~$\alpha \in [0.7,0.99]$. Moreover, already for $4N=400$, the~probability to cheat was below $10\%$ (in the honest scenarios, this was so for $4N=300$). Given the noise level $\kappa$, finding optimal parameters $N$, the~range of $\alpha$ and $p(\alpha)$, was straightforward using the ``brute force'' techniques presented in the current paper (running computer codes and simulations). Full analytical study of such optimization problem is a matter of future~study.

Unlike the classical counterparts, the~protocol was based on the laws of physics, and~the clients did not need to exchange a huge number of signed authenticated messages during the actual contract signing process (the exchange phase). Thus, the~protocol was abuse-free: a client cannot prove to be involved in the actual act of signing the contract. In~contrast, in~the classical counterparts, having the signed and authenticated messages received during the exchange phase, a~client could show them to other interested clients to negotiate better terms of a financial transaction. In~other words, classically, a~dishonest client can abuse the signing process by falsely presenting his/her interest in the deal, while~actually using the protocol to achieve a different goal(s). Unlike generic quantum security protocols (say, quantum key distribution), preforming quantum measurements different from those prescribed by the protocol cannot help a cheating client (see Appendix~\ref{sec:dishonest_noisy} for details). In~the current proposal, each client can independently obtain the signed contract, without~the other client being present, which was not possible in~\cite{paun:2011}. Thus, the~probability to cheat was assigned to a real event (whereas in~\cite{paun:2011}, it was just a formal figure of merit). Unlike the classical counterparts (and the previous quantum proposal~\cite{paun:2011}), when first contacting Trent, the~clients did not need to agree upon a definitive contract. Moreover, Trent never learned the actual content of the protocol, as~the clients provided Trent its hash value, $h^\ast=h(M)$, given by publicly-known hash function $h$.

\begin{acknowledgments}
The authors acknowledge the support of SQIG -- Security and Quantum Information Group, the Instituto de Telecomunica\c{c}\~oes (IT) Research Unit, ref. UID/EEA/50008/2019, funded by Funda\c{c}\~ao para a Ci\^encia e Tecnologia (FCT), and the FCT projects Confident PTDC/EEI-CTP/4503/2014, QuantumMining POCI-01-0145-FEDER-031826, Predict PTDC/CCI-CIF/29877/2017, supported by the European Regional Development Fund (FEDER), through the Competitiveness and Internationalization Operational Programme (COMPETE 2020), and by the Regional Operational Program of Lisbon. A.S. acknowledges funds granted to LaSIGE Research Unit, ref. UID/CEC/00408/2013. P.Y. acknowledges the support of DP-PMI and FCT (Portugal) through the scholarship PD/BD/113648/2015. The authors acknowledge Marie-Christine R$\ddot{\mbox{o}}$hsner for pointing out an attack on the protocol, which lead to improvement of the work. N.P. acknowledges Aleksandra Dimi\'c for fruitful discussions.
\end{acknowledgments}

\onecolumngrid

\appendix
\section*{APPENDIX}
In this Appendix, we analyze protocol's security in case the communication is interrupted at step $m$. To proceed to the Binding phase, the two clients measure their respective Honest observables ($\hat{H}_{\texttt{A}_i}$ and $\hat{H}_{\texttt{B}_i}$) on the rest of their qubits, thus increasing their probability to pass Trent's test 10a on qubits shared with him. For the qubits shared between Trent and the other client, used in test 10b, the best they can do is to guess their outcomes. Note that this strategy is optimal for both the case of honest clients, studied in Appendix~\ref{sec:soundness}, as well as for a dishonest client, discussed in Appendix~\ref{sec:dishonest_noisy}.

Before discussing protocol's correctness for the case of honest, and security for dishonest clients, we present the overall set-up: the ``configuration'' $\mathcal L$ and the details of the structure of the figure of merit, the expected probability to cheat for Bob and Alice. We assume the scenario in which the communication is interrupted upon both clients received $m$ measurement outcomes from the other. The biased case of one client having $m$, while the other $m-1$, results is described analogously. A simplified protocol description is given in Figure~\ref{fig:setup}.

\begin{figure}
\centering
\includegraphics[width=0.56\linewidth]{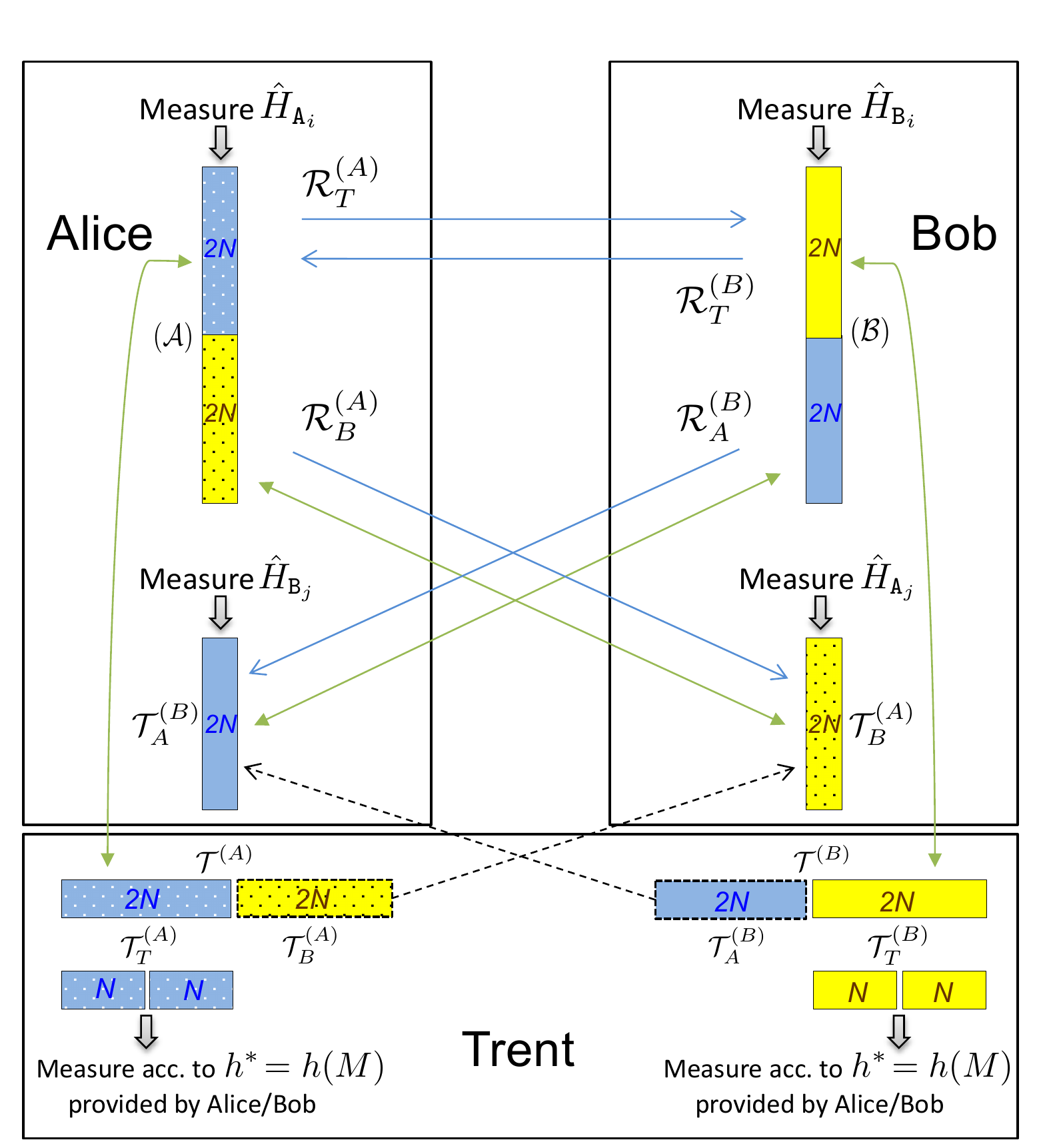}
\caption{\footnotesize{Stages of the protocol. The green arrows represent entanglement between the corresponding qubits. The blue arrows represent the transfer of measurement outcomes. The dashed arrows represent the transfer of qubits $\mathcal{T}_A^{(B)}$ and $\mathcal{T}_B^{(A)}$ from Trent to Alice and Bob, respectively. The big boxes represent shielded private laboratories of Alice, Bob and Trent.}}
\label{fig:setup}
\end{figure}

\section{Probabilistic fairness for honest clients} \label{sec:soundness}
\unskip
In this scenario, we assume that both Alice and Bob are honest until step $m$ after which the communication is interrupted. Without loss of generality, we assume that Alice was the first one to start the information exchange. To analyze the fairness of the protocol, we consider the case when the communication is interrupted after Alice has sent $m$ outcomes to Bob but received only $m-1$ outcomes in return, so Bob has a slight advantage over Alice. The case where both Alice and Bob have $m$ outcomes each is perfectly symmetric (that is, both Bob and Alice have the same expected probability to bind). In this Section, we will analyze the ideal case of a noiseless channel, as well as the protocol's robustness in the presence of noise.

\subsection{Noiseless channel} \label{sec:honest_noiseless}

In the absence of any noise, both honest Alice and Bob obtain perfect correlations on their respective qubits shared (entangled) with Trent and among themselves. Hence, both clients will, with certainty, pass each other verification tests, $P_{ABS}(m)=P_{BAS}(m)=1$, and thus $P(m) = P_{ABS}(m)\cdot P_{BAS}(m) = 1$ (i.e., the reason for communication interruption is the failure of the network). Then, the probability to cheat, say, for Bob~\eqref{cheat}, is given by 
\begin{equation} \label{eq:honest_cheat}
P_{ch}^\bob (m;\alpha) = P^\bob_{bind}(m;\alpha)\left[1-P^\alice_{bind}(m;\alpha)\right]. 
\end{equation}

Moreover, the probabilities for the clients to pass Trent's test on their own qubits, $P_{ATH}(m;\alpha)$ and $P_{BTH}(m;\alpha)$, are also $1$. Thus, Bob's probability to bind~\eqref{eq:Pm} is also simplified, giving
\begin{equation}
P^\bob_{bind}(m;\alpha) = P_{BTH}(m;\alpha) P_{BTA}(m;\alpha) = P_{BTA}(m;\alpha)
\end{equation}
and analogously for Alice.

Out of the $m$ correct results that Bob received from Alice (note that we assume ideal noiseless scenario), $N-\ell_{T2}^{(A)}$ are relevant when presenting to Trent to prove Alice's commitment. With $\alpha$ chosen randomly by Trent, let $\lfloor \alpha N\rfloor$ be the number of correct results corresponding to $N$ qubits from $\mathcal{T}_{T2}^{(A)}$ and $\mathcal{T}_{T2}^{(B)}$ each, that Trent needs to receive from Bob in order to bind the contract for him. For the case $\lfloor \alpha N\rfloor \leq N - \ell_{T2}^{(A)}$, Bob already has more than the required number of correct results, hence his probability to bind the contract will be $1$. When $\lfloor \alpha N\rfloor > N - \ell_{T2}^{(A)}$, Bob must correctly guess at least $n_{c}=\lfloor \alpha N\rfloor -\left(N-\ell_{T2}^{(A)}\right)$ out of the remaining $\ell_{T2}^{(A)}$ results to convince Trent to bind the contract for him. Hence, his probability to bind the contract is given by
\begin{equation}\label{eq:1}
P_{bind}^{\bob}(m;\alpha|\ell_{T2}^{(A)})=P_{BTA} (m;\alpha|\ell_{T2}^{(A)})=
\begin{cases}
1 & \:\:\:\:\:\:\: \textrm{if } \lfloor \alpha N\rfloor \leq N-\ell_{T2}^{(A)} \vspace*{1mm}\\
\ds 2^{-\ell_{T2}^{(A)}} \sum\limits_{n=n_{c}}^{\ell_{T2}^{(A)}} \dbinom{\ell_{T2}^{(A)}}{n} & \:\:\:\:\:\:\: \textrm{if } \lfloor \alpha N\rfloor>N-\ell_{T2}^{(A)}.
\end{cases}
\end{equation}
Here, $\dbinom{\ell_{T2}^{(A)}}{n}$ gives the number of possible choices for $n$ out of $\ell_{T2}^{(A)}$, and $2^{-\ell_{T2}^{(A)}}=2^{-n} \ 2^{-(\ell_{T2}^{(A)} -n)}$ gives the probability of guessing correctly exactly $n$ results (incorrectly on the rest $\ell_{T2}^{(A)} -n$ results).

Averaging over $\ell_{T2}^{(A)}$ gives the binding probability for Bob, as a function of measurement outcomes obtained from Alice, $m$ (the round of communication interruption), and the parameter $\alpha$
\begin{equation}
\label{eq:bobbindexact}
P_{bind}^{\bob}(m;\alpha)=\sum_{\ell_{T2}^{(A)}} q\left(\ell_{T2}^{(A)}\right) P_{bind}^{\bob}\left(m;\alpha|\ell_{T2}^{(A)}\right) ,
\end{equation}
with the probability distribution for $\ell_{T2}^{(A)}$ given by
\begin{equation}
\label{eq:B_ell}
q\left(\ell_{T2}^{(A)}\right)=\frac{\dbinom{m}{N-\ell_{T2}^{(A)}} \dbinom{4N-m}{\ell_{T2}^{(A)}}}{\dbinom{4N}{N}}.
\end{equation}
To verify the last expression, note that at the $m$-th step: (i) there are $\dbinom{m}{N-\ell_{T2}^{(A)}}$ ways of choosing the $N-\ell_{T2}^{(A)}$ results already obtained from Alice; (ii) $\dbinom{4N-m}{\ell_{T2}^{(A)}}$ ways of choosing the $\ell_{T2}^{(A)}$ results from the remaining $4N-m$ to guess; (iii) there are $\dbinom{4N}{N}$ ways of choosing the $N$ results, relevant for binding the contract for Bob, from the total $4N$ results.

Note that in general $\ell_{T2}^{(A)}$ takes values from $0$ to $N$, but for the step $m$ of communication interruption, $\ell_{T2}^{(A)}$ is constrained to the following values 
\begin{equation}
\label{eq:ranges}
\begin{array}{rcll}
  \ell_{T2}^{(A)} & = & N-m\:, \dots ,N   & \:\:\:\:\:\:\:\: \mbox{for } m < N, \vspace{1.7mm}\\
  \ell_{T2}^{(A)} & = & 0\:, \dots ,N  &  \:\:\:\:\:\:\:\: \mbox{for } N \leq m < 3N, \vspace{1.7mm}\\
  \ell_{T2}^{(A)} & = & 0\:, \dots ,4N-m  &  \:\:\:\:\:\:\:\: \mbox{for } 3N \leq m.
\end{array}
\end{equation}
The above ranges define our summation $\sum_{\ell_{T2}^{(A)}}$. When $\ell_{T2}^{(A)}$ is out of the first or the third range, we have $m < N - \ell_{T2}^{(A)}$, and $4N-m < \ell_{T2}^{(A)}$, respectively. Since in those cases the binomials $\dbinom{m}{N-\ell_{T2}^{(A)}}$ and $\dbinom{4N-m}{\ell_{T2}^{(A)}}$ are by definition equal to zero, we can always take the $\ell_{T2}^{(A)} = 0\:, \dots ,N$ range.

\begin{figure}\centering          
\begin{tabular}{cc}
\includegraphics[width=6.5cm]{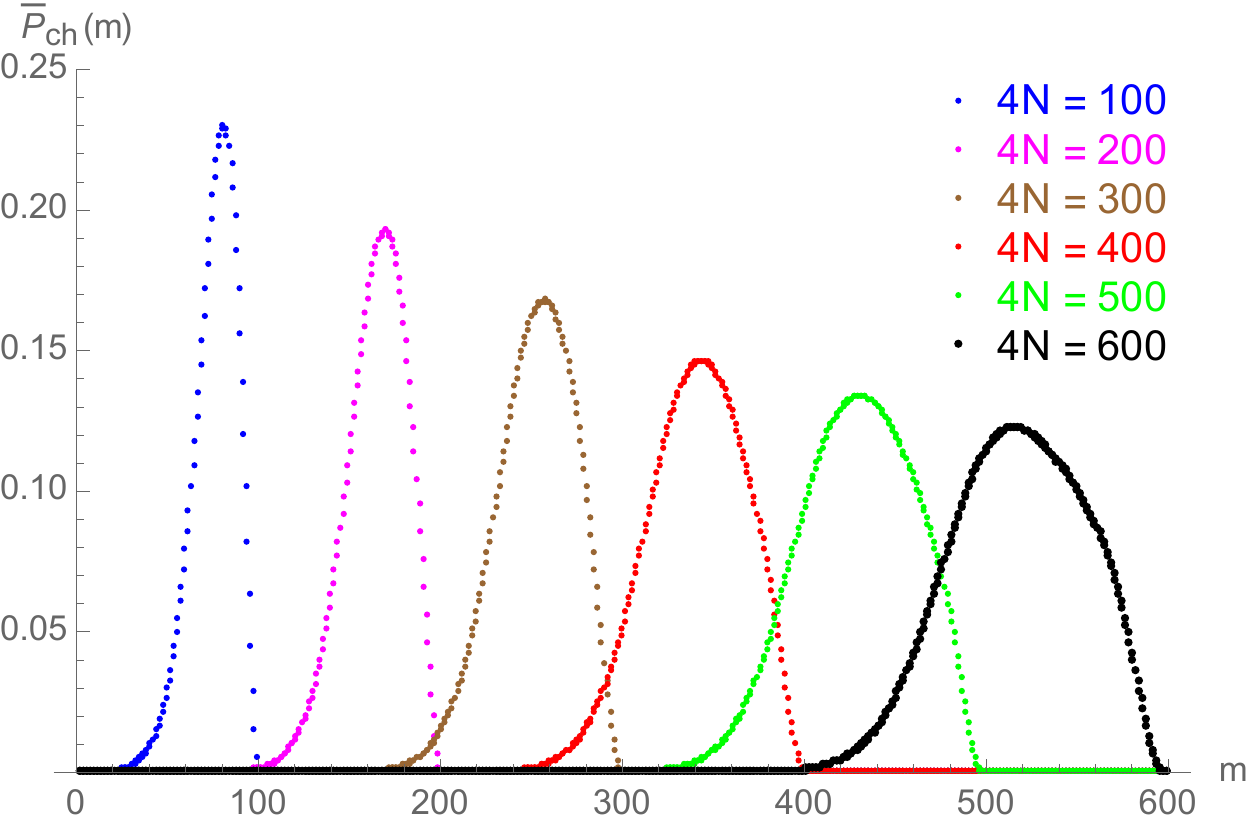}   \ \ \ \      &
\includegraphics[width=6.5cm]{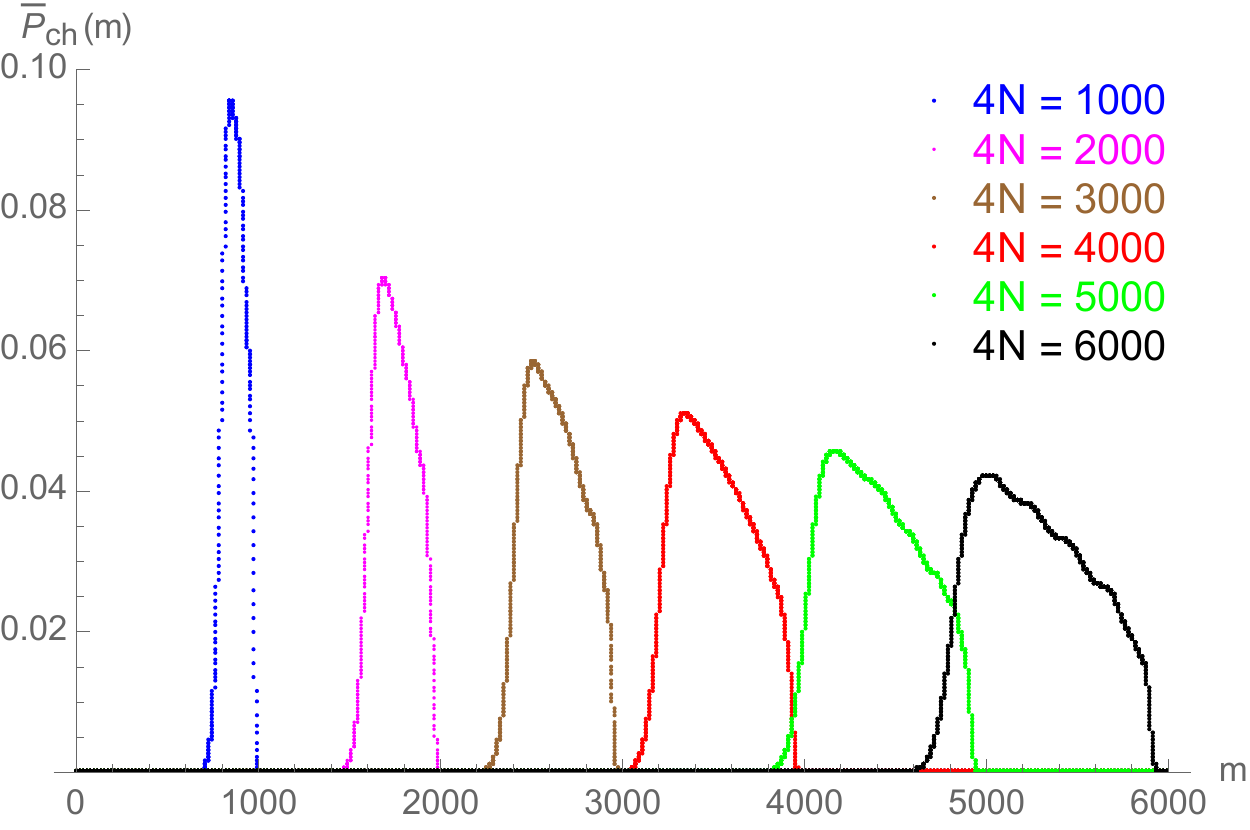} \tabularnewline
\end{tabular}
\caption{\footnotesize{Honest scenario (noiseless channel): the expected probability to cheat, $\bar{P}_{ch}(m)$, is plotted against the step of communication interruption $m$ (from $1$ to $4N$), for $\alpha$ chosen uniformly on the interval $[0.9,0.99]$.}}
\label{fig:honest_noiseless2D}
\end{figure}

By replacing $m$ with $m-1$ in equation~\eqref{eq:B_ell}, one obtains Alice's probability to bind the contract $P_{bind}^{\alice}(m;\alpha)$ (note that, having one less measurement results than Bob, she therefore has a small disadvantage).

Bob's probability to cheat~\eqref{eq:honest_cheat}, when the communication is interrupted at step $m$, and for a fixed $\alpha$, is given by $P_{ch}^\bob (m;\alpha)=P^\bob_{bind}(m;\alpha)\left[1-P^\alice_{bind}(m;\alpha)\right].$ Using~\eqref{eq:bobbindexact} and its counterpart for Alice one can evaluate the expected probability to cheat $\bar{P}^\bob_{ch}(m) = \int p(\alpha)\: P_{ch}^\bob (m;\alpha)\: \mbox d \alpha$ for every $m$. In Figure~\ref{fig:honest_noiseless2D}, we plot the expected probability to cheat, $\bar{P}^\bob_{ch}(m)$, against the step of communication interruption $m$ (running from $1$ to $4N$), for the simplest case of the uniform distribution $p(\alpha)$ on the interval $[0.9,0.99]$. Since Alice starts first, and Bob is thus privileged, we have that $\bar{P}^\alice_{ch}(m) \leq \bar{P}^\bob_{ch}(m) = \bar{P}_{ch}(m)$. The maximal value shows the behavior $\max_m \bar{P}_{ch}(m) \propto N^{-1/2}$, as presented in Figures~\ref{fig:plots}~(a) and~\ref{fig:plots}~(d), from the main text.

In addition to the quantitative results for up to $4N = 6000$, below we present an analytic proof of the asymptotic behavior for the maximal expected probability to cheat, showing that as $N \rightarrow \infty$ we have $P_{ch}(m;\alpha) \propto N^{-1/2}$. The proof for the case of a noisy channel follows analogously (see~\cite{hana:17} for the analysis for the contract signing presented in~\cite{paun:2011}).

%{\color{red}begin{theorem}
\begin{theorem}
\label{theorem} 
In the honest noiseless case, for uniformly chosen $\alpha\in (1/2,1)$, we have that
$$P_{ch}(m;\alpha) \in \mathcal O (N^{-1/2}).$$
\end{theorem}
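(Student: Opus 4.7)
My plan is to exploit the exact noiseless structure and reduce the claim to a concentration statement for a pair of independent sums. Set $F(\alpha) := P^\bob_{bind}(m;\alpha)$ and $G(\alpha) := P^\alice_{bind}(m;\alpha)$. Both are upper-tail probabilities of integer random variables,
\[
F(\alpha) = \Pr[Z_B \geq \lfloor \alpha N \rfloor], \qquad G(\alpha) = \Pr[Z_A \geq \lfloor \alpha N \rfloor],
\]
where $Z_B = Y_B + X_B$, with $Y_B$ hypergeometric with parameters $(4N,m,N)$ (the count of Trent's $N$ verification qubits whose Alice-outcome has already reached Bob) and $X_B \mid Y_B \sim \mathrm{Bin}(N-Y_B,1/2)$ (Bob's random guesses on the remainder); $Z_A$ is defined identically but with $m-1$ in place of $m$. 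Crucially, $Z_A$ and $Z_B$ are independent, since they arise from disjoint qubit sets $\mathcal{T}_{T2}^{(A)}$ and $\mathcal{T}_{T2}^{(B)}$. An elementary computation using the law of total variance gives $\mathbb{E}[Z_B] = N/2 + m/8$ and $\mathrm{Var}(Z_B) = \Theta(N)$, with $Z_A$ matching these up to an $O(1)$ shift.

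Next I would discretise the integral. With $p(\alpha) = 2$ uniform on $(1/2,1)$ and $k := \lfloor \alpha N \rfloor$,
\[
\bar{P}_{ch}^{\bob}(m) \;=\; \frac{2}{N}\!\!\sum_{k=\lceil N/2 \rceil}^{N-1}\!\! F(k)\bigl[1-G(k)\bigr] + \mathcal{O}(1/N).
\]
The clean collapse of the sum, which I expect to be the cleanest route, uses the independence of $Z_A$ and $Z_B$:
\[
\sum_{k \in \mathbb{Z}} \Pr[Z_B \geq k]\,\Pr[Z_A < k] \;=\; \sum_k \Pr[Z_A < k \leq Z_B] \;=\; \mathbb{E}\bigl[(Z_B - Z_A)_+\bigr] \;\leq\; \mathbb{E}|Z_B - Z_A|.
\]
By Cauchy--Schwarz, $\mathbb{E}|Z_B - Z_A| \leq \sqrt{\mathrm{Var}(Z_B)+\mathrm{Var}(Z_A)+(\mathbb{E}[Z_B-Z_A])^2} = \mathcal{O}(\sqrt{N})$. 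Dividing by $N/2$ yields $\bar{P}_{ch}^{\bob}(m) = \mathcal{O}(N^{-1/2})$ uniformly in $m$, and the symmetric argument with $m \leftrightarrow m-1$ gives the analogous bound for Alice.

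The main obstacle I anticipate is justifying the truncation from $\sum_{k\in\mathbb{Z}}$ back to $k \in [\lceil N/2 \rceil, N-1]$. For $m$ close to $0$ or $4N$ the transition threshold $\alpha^\star(m) = 1/2 + m/(8N)$ drifts toward the boundary of the integration interval, and one must verify, by standard Chernoff/Hoeffding tail bounds for the hypergeometric $Y_B$ and the conditional binomial $X_B$, that the omitted $k$'s contribute only exponentially small mass, hence negligibly compared to the main $\mathcal{O}(\sqrt{N})$ term. A secondary nuisance is the $m$-versus-$(m-1)$ asymmetry between Alice and Bob together with the floor-function rounding, both of which are absorbed as the $O(1)$ mean-shift contribution inside the Cauchy--Schwarz step.
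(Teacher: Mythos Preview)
Your argument is correct and takes a genuinely different route from the paper. The paper establishes the stronger \emph{pointwise} bound $P_{ch}(m;\alpha)\in\mathcal O(N^{-1/2})$ for every fixed $\alpha\in(1/2,1)$ by (i) replacing the binomial tail $P^{\bob}_{bind}(m;\alpha\mid\ell_{T2}^{(A)})$ with a Gaussian complementary error function, (ii) Taylor-expanding that $\mathrm{erfc}$ about $\ell_{T2}^{(A)}=2N(1-\alpha)$, (iii) approximating the hypergeometric weight $q(\ell_{T2}^{(A)})$ by a normal density via Feller's theorem, and (iv) verifying with Mathematica that the resulting integrals for $P^{\bob}_{bind}$ and $P^{\alice}_{bind}$ match up to $\mathcal O(N^{-1/2})$. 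You instead attack the \emph{averaged} quantity $\bar P_{ch}(m)$ directly: encoding $P^{\bob}_{bind}(m;\alpha)=\Pr[Z_B\ge\lfloor\alpha N\rfloor]$ for independent $Z_A,Z_B$ and collapsing the $k$-sum via $\sum_k\Pr[Z_B\ge k]\Pr[Z_A<k]=\mathbb E[(Z_B-Z_A)_+]$ reduces the whole problem to a single variance bound. This is shorter, fully rigorous, and bypasses both the normal approximations and the computer-algebra limits; the trade-off is that you do not recover the per-$\alpha$ statement the paper actually proves, only its average (which is, however, what the rest of the paper uses). Two minor remarks: your truncation concern is unnecessary for an upper bound, since extending the finite sum over $k\in[\lceil N/2\rceil,N-1]$ to all of $\mathbb Z$ only adds nonnegative terms; and $\mathrm{Var}(Z_B)=\mathcal O(N)$ rather than $\Theta(N)$ uniformly in $m$ (it vanishes at $m=4N$), but only the upper bound is needed. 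Finally, Alice's verification set is $\mathcal T_{T1}^{(B)}$, not $\mathcal T_{T2}^{(B)}$; the independence you invoke still holds because Trent's partitions on the $A$-side and $B$-side are drawn independently.
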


\begin{proof}
We are going to show that for all $1<m<4N$ and $\alpha\in (1/2,1)$ we have that $P_{ch}(m;\alpha) \in \mathcal O (N^{-1/2})$ and therefore, the statement follows straightforwardly.

Recall that a Binomial distribution $B(k,p)$, with $k$ sufficiently large and $p$ bounded away from $0$ and $1$ (that is, does not tend to 0 or 1 as $N$ grows to infinity), can be approximated by a Normal distribution with mean, $kp$, and variance, $kp(1-p)$ as $\mathcal N (kp,kp(1-p))$. In our case (with $k=\ell_{T2}^{(A)})$, for $\ell_{T2}^{(A)}$ sufficiently large and $p=1/2$ (guessing probability of a binay bit), the Binomial distribution, $B\left(\ell_{T2}^{(A)}, 1/2 \right)$ can be approximated by $\mathcal N \left(\ell_{T2}^{(A)}/2\ ,\ \ell_{T2}^{(A)}/4\right)$. Hence, approximating $P_{bind}^{\bob}\left(m;\alpha|\ell_{T2}^{(A)}\right)$ for the case $\lfloor\alpha N\rfloor>N-\ell_{T2}^{(A)}$, we have
\begin{equation}
\label{eq:bobbindapprox0}
	\begin{array}{rcl}
		P_{bind}^{\bob}(m;\alpha|\ell_{T2}^{(A)}) &=& \ds 2^{-\ell_{T2}^{(A)}} \sum\limits_{n=n_c}^{\ell_{T2}^{(A)}} \dbinom{\ell_{T2}^{(A)}}{n} \\[3mm] 
		&\approx & \ds \int_{n_c}^{\infty} \frac{1}{\sqrt{2\pi \sigma^2}} e^{-\frac{(n-\mu_B)^2}{2\sigma_B^2}} \mbox{d} n = \frac 1 2 \mbox{erfc} \left[ \sqrt{\frac{2}{\ell_{T2}^{(A)}}}\left(n_c - \frac{\ell_{T2}^{(A)}}{2} \right) \right] \, ,	
	\end{array}
\end{equation}
with $n_c = \lfloor \alpha N\rfloor -(N-\ell_{T2}^{(A)})$, and Binomial mean and variance given by $\mu_B=\ell_{T2}^{(A)}/2$ and $\sigma_B^2=\ell_{T2}^{(A)}/4$, respectively. Therefore
\begin{equation}
\label{eq:bobbindapprox}
P^\bob_{bind}(m;\alpha|\ell_{T2}^{(A)})\approx
\begin{cases}
1 & \:\:\:\:\:\:\:\:\:\:\:\:\:\:\:\:\:\:\:\: \textrm{if } \lfloor\alpha N\rfloor\leq N-\ell_{T2}^{(A)} \, , \vspace*{2mm}\\
\ds \frac 1 2 \ \mbox{erfc}\left[ \sqrt{\frac{2}{\ell_{T2}^{(A)}}}\left( N(\alpha -1) + \frac{\ell_{T2}^{(A)}}{2} \right) \right] & \:\:\:\:\:\:\:\:\:\:\:\:\:\:\:\:\:\:\:\: \textrm{if } \lfloor\alpha N\rfloor>N-\ell_{T2}^{(A)}\, ,
\end{cases}
\end{equation}
where $\mbox{erfc}(\beta)$ is the complementary error function, defined as
\begin{equation}
\mbox{erfc}(\beta)= \frac{2}{\sqrt{\pi}} \int_{\beta}^{\infty} e^{-{\gamma}^2} \mbox{d} \gamma\, .
\end{equation}
For the case $\lfloor\alpha N\rfloor>N-\ell_{T2}^{(A)}$, we expand $P^\bob_{bind}(m;\alpha|\ell_{T2}^{(A)})$ around $\ell_{T2}^{(A)}= 2N(1-\alpha)$ where it has the value $1/2$, to obtain
\begin{eqnarray}\label{eq:expand}
\frac 1 2 \ \mbox{erfc}\left[ \sqrt{\frac{2}{\ell_{T2}^{(A)}}}\left( N(\alpha -1) + \frac{\ell_{T2}^{(A)}}{2} \right) \right]\!\!\!\!\!\!\!\!\!\!&&=\frac 1 2 + c_1 (\ell_{T2}^{(A)} -2N(1-\alpha)) + c_2 (\ell_{T2}^{(A)} -2N(1-\alpha))^2 \nonumber\\
&& \ \ \ \ \ \ + \ \! \mathcal{O}((\ell_{T2}^{(A)} -2N(1-\alpha))^3 ).
\end{eqnarray}
Let $P_1 (m;\alpha|\ell_{T2}^{(A)})=\frac 1 2 + c_1 (\ell_{T2}^{(A)} -2N(1-\alpha)) + c_2 (\ell_{T2}^{(A)} -2N(1-\alpha))^2$. We also obtain the limits for $\ell_{T2}^{(A)}$ where this function becomes $1$ and $0$, denoted by $\ell_{inf}$ and $\ell_{sup}$, respectively.

We know that
\begin{equation}\label{eq:avgBob}
P_{bind}^{\bob}(m;\alpha)=\sum_{\ell_{T2}^{(A)}} q\left(\ell_{T2}^{(A)}\right) P_{bind}^{\bob}\left(m;\alpha|\ell_{T2}^{(A)}\right)\, ,
\end{equation}
with the probability distribution for $\ell_{T2}^{(A)}$ given by
\begin{equation} \label{eq:ellBob}
q\left(\ell_{T2}^{(A)}\right)=\frac{\dbinom{m}{N-\ell_{T2}^{(A)}} \dbinom{4N-m}{\ell_{T2}^{(A)}}}{\dbinom{4N}{N}}\, .
\end{equation}
By Feller's result~\cite{feller:68} on the approximation of hypergeometric distribution, one can approximate~\eqref{eq:ellBob} for $N\rightarrow\infty$, $\frac{N}{4N}\rightarrow t\in (0,1)$ to a Normal distribution as
\begin{equation}\label{eq:fellBob}
q(\ell_{T2}^{(A)})\sim \frac{e^{-(\ell_{T2}^{(A)}-\mu_{\ell})^2/2\sigma_{\ell}^2(1-t)}}{\sqrt{2\pi \sigma_{\ell}^2(1-t)}}=P_2 (\ell_{T2}^{(A)})\, ,
\end{equation}
with mean, $\mu_\ell=N-m/4$, variance, $\sigma_{\ell}^2=N(\frac{m}{4N})(\frac{4N-m}{4N})$, and $t=1/4$. Hence,~\eqref{eq:avgBob} can be upper bounded (shifting the limit $\ell_{T2}^{(A)}=N(1-\alpha)$ to $\ell_{T2}^{(A)}=\ell_{inf}$ and using the approximation~\eqref{eq:expand} from $\ell_{inf}$ to $\ell_{sup}$) by
\begin{eqnarray}
\nonumber P_{bind}^{\bob}(m;\alpha)\!\! && \!\!\!\!\!\!\!\! \leq 
\int_{0}^{\ell_{inf}} 1\!\cdot\! P_2 (\ell_{T2}^{(A)})\; \mbox d \ell_{T2}^{(A)} +
\underbrace{\int_{\ell_{inf}}^{\ell_{sup}} P_2 (\ell_{T2}^{(A)})\! \cdot\! \left[P_1 (m;\alpha|\ell_{T2}^{(A)}) + k\left(\ell_{T2}^{(A)} -2N(1-\alpha)\right)^3 \right] \mbox d \ell_{T2}^{(A)}}_{B_1}\\
& & \;\;\;\;\;\;\;\;\;\;\;\;\;\;\;\;\;\;\;\;\;\;\;\;\;\;\;\;\;\;\;\;\;\;\;\;\;\;\;\;\;\;\;\;\; 
+ \underbrace{\int_{\ell_{sup}}^{+\infty} P_2 (\ell_{T2}^{(A)}) \cdot P_{bind}^{\bob} (m;\alpha|\ell_{T2}^{(A)})\; \mbox d \ell_{T2}^{(A)}}_{B_2}\, \\
&&  \!\!\!\!\!\!\!\! \leq \int_{0}^{\ell_{inf}} P_2 (\ell_{T2}^{(A)})\; \mbox d \ell_{T2}^{(A)} + \mathcal O (N^{-1/2}),\label{eq:boundBob}
\end{eqnarray}
where the $\mathcal O (N^{-1/2})$ in Eq.~\eqref{eq:boundBob} was obtained by computing limits for $B_1$ and $B_2$ for $N\rightarrow \infty$ with the Mathematica software package. To compute the limit for $B_1$, from~\cite{B:2000} we used the fact that by change of variable $N=1/x$, if $\lim_{x\rightarrow 0^{+}} f(x)=0$ then $\lim_{N\rightarrow \infty} f(N)=0$ and obtained the desired bound. For computing $B_2$, we upper bounded $P_{bind}^{\bob} (m;\alpha|\ell_{T2}^{(A)})$ by $\frac 1 2 \mbox{erfc}[\lambda N]$ for some constant $\lambda>0$ and used the upper bound for complementary error function, which also led to the $O (N^{-1/2})$ bound. 

To obtain the expression for Alice's probability to bind, $P^\alice_{bind}(m;\alpha)$, we have
\begin{equation} \label{eq:ellAlice}
q\left(\ell_{T1}^{(B)}\right)=\frac{\dbinom{m-1}{N-\ell_{T1}^{(B)}} \dbinom{4N-m+1}{\ell_{T1}^{(B)}}}{\dbinom{4N}{N}}\, .
\end{equation}
Making the same approximations, we get $P_2\left(\ell_{T1}^{(B)}\right)$ similar to~\eqref{eq:fellBob}, with $m$ replaced by $m-1$. Hence
\begin{eqnarray}
\nonumber P_{bind}^{\alice}(m;\alpha)&\geq & 
\int_{0}^{\ell_{inf}} P_2 (\ell_{T1}^{(B)})\; \mbox d \ell_{T1}^{(B)} - 
\underbrace{\int_{0}^{\ell_{inf}} (1-P_{bind}^{\alice}(m;\alpha|\ell_{T1}))P_2 (\ell_{T1}^{(B)})}_{A_1}\; \mbox d \ell_{T1}^{(B)} \\ & &
 \;\;\;\;\;\;\;\;\;\;\;\;\;\;\;\;\;\;\;\;\;\;\;\;\;\;\;\;\;\;\;\;\;\;\;\;\;\;\;\;\;\;\; +
 \underbrace{\int_{\ell_{inf}}^{+\infty} P_2 (\ell_{T1}^{(B)}) \;P_{bind}^{\alice}(m;\alpha|\ell_{T1}^{(B)})\; \mbox d \ell_{T1}^{(B)}}_{A_2}\\
& \geq & \int_{0}^{\ell_{inf}} P_2 (\ell_{T1}^{(B)})\; \mbox d \ell_{T1}^{(B)}-\mathcal O (N^{-1/2}),\label{eq:boundAlice}
\end{eqnarray}
where the bounds in Eq.~\eqref{eq:boundAlice} was obtained with the help of the Mathematica software package. Since we are computing the lower bound, one can drop $A_2$ and it suffices to show that the integral of $A_1$ computed between $N(1-\alpha)$ (as $P_{bind}^{\alice}(m;\alpha|\ell_{T1})$ is $1$ upto $N(1-\alpha)$) and $\ell_{inf}$ vanishes when $N\rightarrow \infty$. Since $(1-P_{bind}^{\alice}(m;\alpha|\ell_{T1}))<1$, we can drop this and compute the limit of the integral of $P_2 (\ell_{T1}^{(B)})$, which also provides the bound.

If the communication was interrupted at step $m$, for all choices of $\alpha \in (1/2,1)$ by Trent, we want to compute, say for Bob, the probability to cheat, given by
\begin{eqnarray}
P_{ch}^\bob (m;\alpha)&= &P^\bob_{bind}(m;\alpha)\left[1-P^\alice_{bind}(m;\alpha)\right]\\
   &\leq& \left[ \int_{0}^{\ell_{inf}}  P_2 (\ell_{T2}^{(A)})\; \mbox d \ell_{T2}^{(A)} + O(N^{-{\frac{1}{2}}})\right] \left[1- \int_{0}^{\ell_{inf}} P_2 (\ell_{T1}^{(B)})\; \mbox d \ell_{T1}^{(B)}+\mathcal O (N^{-1/2})\right]\\
   &=& \mathcal O (N^{-1/2}),\label{eq:finalbound}
\end{eqnarray}
where the $\mathcal O (N^{-1/2})$ in Eq.~\eqref{eq:finalbound} was again obtained by computing a limit with the Mathematica software package.
\end{proof}

\subsection{Noisy channel} \label{sec:honest_noisy}

For the case of a noisy channel, a binomial test could be used by both parties for the permitted number of wrong results from the other party. Consider white noise in the channel that decreases the degree of correlation between the honest clients' results. The depolarizing channel, modeling the effects of white noise on a two-qubit mixed state $\rho$, is given by ($I$ is the identity matrix for dimension 4, and $\kappa\in[0,1]$):
\begin{equation}
\mathcal{E}_{d}(\rho)=(1-\kappa)\rho + \kappa \frac{I}{4}.
\end{equation}

Hence, both Alice and Bob receive some inevitable number of incorrect results, due to the noise considered, in spite of both of them measuring their respective Honest observables on their own qubits. Let us denote by $p(xy)$ the probability of Alice and Bob obtaining the results $x$ and $y$, respectively, corresponding to measurements of $\mathcal{E}_{d}(\rho)$ in either computational or diagonal basis, for the case of the entangled two-qubit state $\rho = |\psi^+\rangle\langle\psi^+|$. Given both Alice and Bob measure their respective correct observables, $\hat{H}_{\texttt{A}_i}$ and $\hat{H}_{\texttt{B}_i}$, we have the following probabilities of obtaining different results
\begin{equation}
\begin{array}{rl}
p(00) = & (2 - \kappa)/4, \vspace{1.5mm}\\ 
p(01) = & \kappa/4, \vspace{1.5mm} \\
p(10) = & \kappa/4, \vspace{1.5mm} \\
p(11) = & (2 - \kappa)/4.
\end{array}
\end{equation}
For a given probability of the favorable event (in our case, $p(00)+p(11)$, of obtaining consistent results), and the total number of such events (say, $m_r$), the probability of obtaining exactly $r$ correct results is given by the binomial distribution
\begin{equation}
\mathcal P \left(r | |\psi^{+}\rangle ; m_r,\kappa \right)= \binom{m_r}{r} \Big(p(00)+p(11)\Big)^{r} \Big(p(01)+p(10)\Big)^{m_r -r},
\end{equation}
with mean and variance given by
\begin{equation}
\label{eq:meanvar}
\begin{array}{rl}
\mu =& \ds m_r \Big(p(00)+p(11)\Big) = m_r \left(1-\frac{\kappa}{2}\right),\vspace{1.5mm}\\
\sigma^2 =& \ds m_r \Big(p(00)+p(11)\Big) \Big(1-p(00)-p(11)\Big) = m_r \left(1-\frac{\kappa}{2}\right) \frac{\kappa}{2}.
\end{array}
\end{equation}
Suppose that each party applies a 3-sigma ``acceptance criterion'', then a client will continues as long as he/she has at least $\mu-3\sigma$ consistent results from the other.

From~\eqref{eq:Pm}, Bob's probability to bind the contract is $P_{bind}^\bob (m;\alpha)=\langle P_{BTH}(m;\alpha|\ell_{T2}^{(B)}) \rangle_{\ell^{(B)}_{T2}} \cdot \langle P_{BTA}(m;\alpha|\ell_{T2}^{(A)})  \rangle_{\ell^{(A)}_{T2}}$. While Bob's probability to pass the test on his qubits, $P_{BTH}$, does not depend on step $m$ {\em nor on} $\ell_{T2}^{(B)}$ (he measures all of his qubits), the probability to pass the test on Alice's qubits, $P_{BTA}$, depends on $m$, {\em as well as} on $\ell_{T2}^{(A)}$ (for simplicity, we omit the implicit dependence on the noise parameter $\kappa$). Thus, we have
\begin{equation}\label{eq:BTHBTA}
P_{bind}^\bob (m; \alpha) = P_{BTH}(\alpha) \cdot \langle P_{BTA}(m;\alpha|\ell_{T2}^{(A)})  \rangle_{\ell^{(A)}_{T2}},
\end{equation}
with
\begin{eqnarray}\label{eq:BTH}
\ds P_{BTH} (\alpha) \!\!& = &\!\! 
\underbrace{
\sum_{s = \lfloor \alpha N \rfloor}^N \binom{N}{s} P_{=}^s  P_{\neq}^{N - s},
}_{\substack{\textnormal{probability of obtaining at least} \\ \textnormal{$\lfloor \alpha N \rfloor$ correct results from $N$ by measuring} \\ \textnormal{the correct observable ($\hat{H}_{\texttt{B}_i}$) on all his qubits}}}
\\
\label{eq:BTA}
\ds P_{BTA} (m;\alpha) \!\!& = &\!\! \sum_{\ell_{T2}^{(A)}} q\left(\ell_{T2}^{(A)}\right) \sum_{ t = 0 }^{N- \ell_{T2}^{(A)}} 
	\underbrace{
	\binom{N- \ell_{T2}^{(A)} }{t} P_{=}^{t} P_{\neq}^{(N - \ell_{T2}^{(A)}) - t}
	}_{\substack{\textnormal{probability of obtaining $t$ correct} \\ \textnormal{results from the $N- \ell_{T2}^{(A)}$ results} \\ \textnormal{received upto step $m$}}}\;\; \times
	\underbrace{
	2^{-\ell_{T2}^{(A)}} \sum_{u = \lfloor \alpha N \rfloor - t}^{\ell_{T2}^{(A)}} \binom{\ell_{T2}^{(A)}}{u} 
	}_{\substack{\textnormal{probability of guessing} \\ \textnormal{at least $\lfloor \alpha N \rfloor - t$ correct results} \\ \textnormal{from the $\ell_{T2}^{(A)}$ results to guess}}},
\end{eqnarray}
where $P_{=} = p(00) + p(11) $ and $P_{\neq} =p(10) + p(01)$. Analogously, we define $P_{bind}^\alice (m; \alpha) = P_{ATH}(\alpha) \cdot \langle P_{ATB}(m;\alpha|\ell_{T1}^{(B)})  \rangle_{\ell^{(B)}_{T1}}$ for Alice.

To obtain Bob's probability to cheat~\eqref{cheat}, $P_{ch}^\bob (m;\alpha) = P(m)P^\bob_{bind}(m;\alpha)\left[1-P^\alice_{bind}(m;\alpha)\right]$, we now estimate $P(m) = P_{ABS}(m) \cdot P_{BAS}(m)$. $P_{BAS}(m)$, the probability that Bob passes Alice's tests on their Shared qubits (from the $m-1$ results sent by him to Alice). Note that, in order to reach at step $m$, Bob has to pass the test at all the steps $2, ...., m-2, m-1$. Thus, we can bound $P_{BAS}(m)$ from above by the probability to pass Alice's test at step $m-1$ only
\begin{equation}\label{eq:BAS}
\ds \mathcal{P}_{BAS}(m) = \sum_{\ell_A^{(B)}} q(\ell_A^{(B)}) \;\; \times
\underbrace{\sum_{v=\mu - 3\sigma}^{2N - \ell_A^{(B)}} \binom{2N - \ell_A^{(B)}}{v} P_{=}^v  P_{\neq}^{2N - \ell_A^{(B)} - v}}_{\substack{\text{probability for Alice to obtain at least} \\ \text{$\mu - 3\sigma$ correct results on the $2 N - \ell_A^{(B)}$ results} \\ \text{out of the $m-1$ results received from Bob}}},
\end{equation}
with $\mu$ and $\sigma$ defined by~\eqref{eq:meanvar}, where $m_r=2N - \ell_A^{(B)}$. The probability distribution for $\ell_{A}^{(B)}$ given by
\begin{equation}
q\left(\ell_{A}^{(B)}\right)=\frac{\dbinom{m-1}{2N-\ell_{A}^{(B)}} \dbinom{4N-m+1}{\ell_{A}^{(B)}}}{\dbinom{4N}{2N}}.
\end{equation}
$P_{ABS}(m)$, the probability that Alice passes Bob's test on their Shared qubits (from the $m$ results she sends to Bob), as well as its bound $\mathcal{P}_{ABS}(m)$, are defined analogously. Thus, the overall probability to reach at step $m$ is given by ${P}_{ABS}(m)\cdot {P}_{BAS}(m) \leq \mathcal{P}_{ABS}(m)\cdot \mathcal{P}_{BAS}(m)$. %In the noiseless case, $P_{ABS}(m)$ and $P_{BAS}(m)$ are both 1.

Hence, the average probability for Bob to cheat, with both Alice and Bob having reached the step $m$ by passing each other's tests, is
\begin{equation}
\bar{P}_{ch}^{\bob} (m) \leq \mathcal{P}_{ABS}(m) \cdot \mathcal{P}_{BAS}(m) \int_{\alpha} p(\alpha) P_{bind}^\bob(m;\alpha) \left(1 - P^\alice_{bind}(m;\alpha)\right)\; \mbox d \alpha .
\end{equation}

\begin{figure}
\centering
\begin{tabular}{cc}
\includegraphics[width=6.5cm]{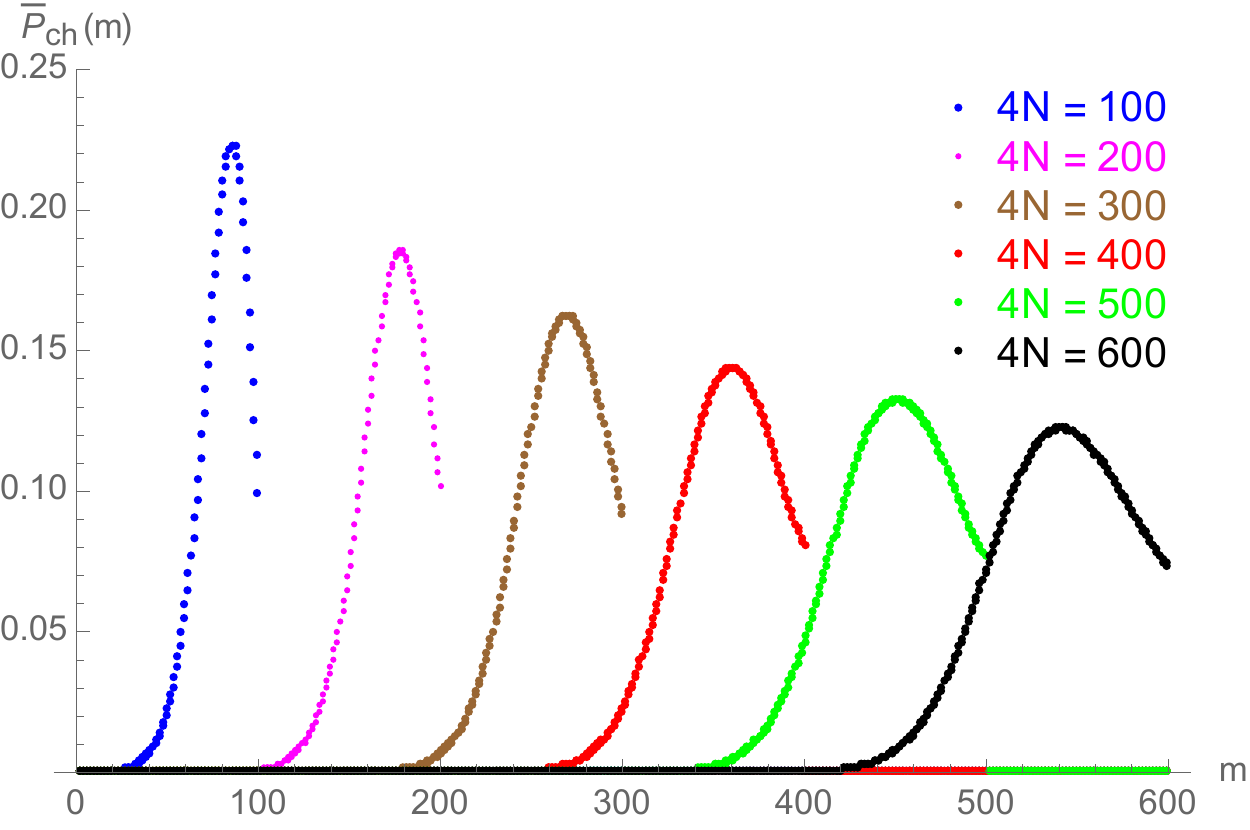}    \ \ \ \     &
\includegraphics[width=6.5cm]{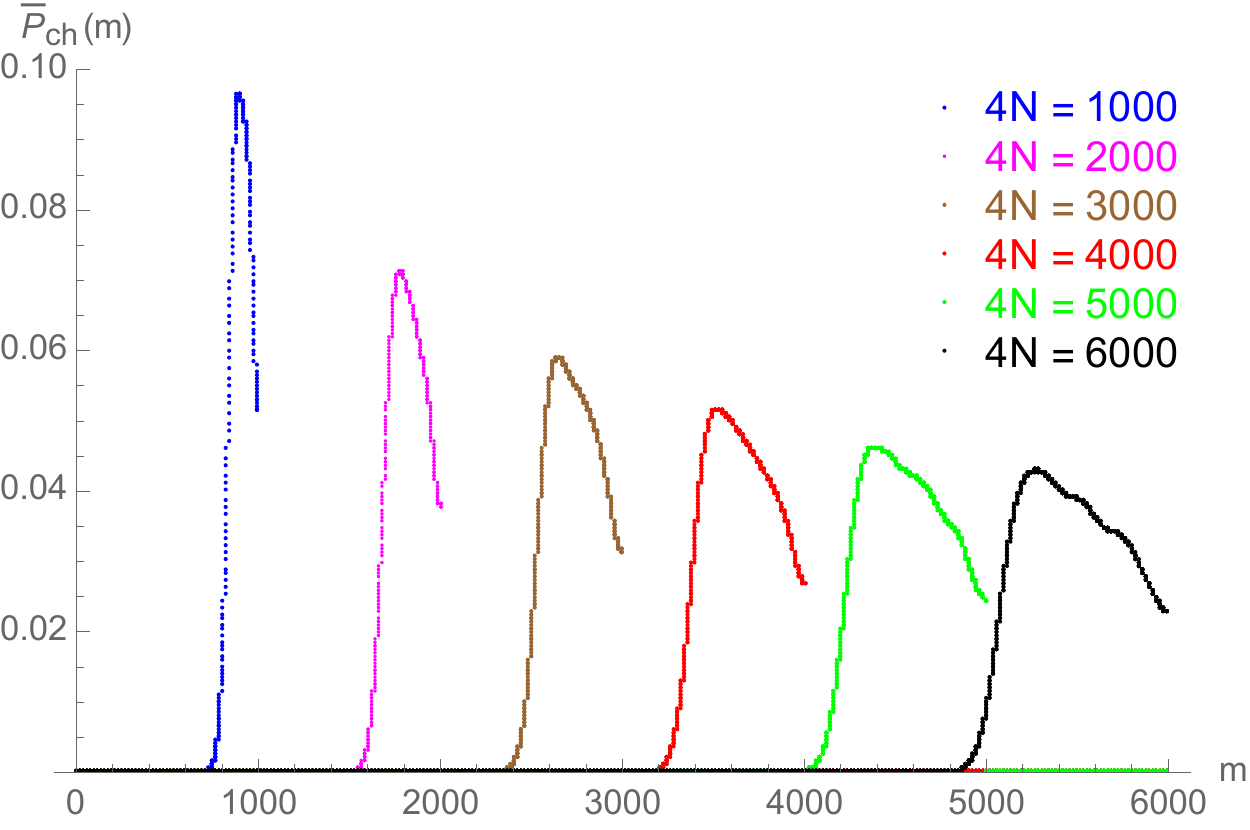} \tabularnewline
\end{tabular}
\caption{\footnotesize{Honest scenario (noisy channel): the expected probability to cheat, $\bar{P}_{ch}(m)$, is plotted against the step of communication interruption $m$ (from $1$ to $4N$), for $\alpha$ chosen uniformly over the interval $[0.9,0.99]$. The noise parameter is $\kappa = 0.05$.}}
\label{fig:honest_noisy2D}
\end{figure}

In Figures~\ref{fig:honest_noisy2D} we present the expected probability to cheat, $\bar{P}_{ch}(m)$, plotted against the total number of communications between Alice and Bob, $4N$, for the 3-sigma acceptance criterion of both Alice and Bob (as before, $p(\alpha)$ is uniform on the interval $[0.9,0.99]$). The maximal value again shows the behavior $\max_m \bar{P}_{ch}(m) \propto N^{-1/2}$, as presented in Figures~\ref{fig:plots}~(b) and~\ref{fig:plots}~(e) from the main text.

\section{Security analysis (probabilistic fairness) against a dishonest client} \label{sec:dishonest_noisy}

In order to cheat, a dishonest client, say Bob, would want to obtain a signed copy of the contract for message $M$ that Alice and Bob initially agreed upon, without letting Alice obtain a signed copy for herself, so that he can use it later on, if he wants to.

The qubits from $\mathcal T^{(A)}_B$ are used to verify Alice's honesty, and therefore Bob measures them according to the protocol (note that he knows which qubits from $\alice$ are entangled with $\mathcal T^{(A)}_B$, and also which observable $H_{\texttt{A}_i}$ to measure). Regarding qubits from $\bob$, unlike the standard quantum cryptographic protocols, where the task of an adversary (say Eve, in key distribution schemes) is to distinguish between mutually non-orthogonal states, in this protocol a cheating Bob knows the pure states of his qubits. He knows that Trent and an honest Alice measure $\hat{H}_{\texttt{B}_i}$ on their respective halves of the pairs entangled with Bob, and thus by measuring $\hat{H}_{\texttt{B}_i}$, he can check which of the two mutually orthogonal states they (will) collapse his system to. Note that Bob cannot take advantage of measuring observables different from those prescribed by the protocol. Since Alice's and Bob's measurements are local, they commute and thus their time order is irrelevant. Thus, by deviating from the protocol, a dishonest Bob can only spoil the correlations between his and Alice's outcomes, a task he can achieve by any random source. Regarding Trent, the time ordering matters, as it is Bob who tells Trent which single-qubit observables to measure, based on the $h^*=h(M)$ he provides Trent with. Nevertheless, since Bob's aim is to have his results as correlated as possible with Trent's, without spoiling the correlations established with Alice's results, Bob should measure his Honest observables $\hat{H}_{\texttt{B}_i}$ on all the qubits from $\bob$.

As argued above, for a cheating client Bob, even knowing to which states the particles sent to him are collapsed to (due to Alice's and Trent's measurements), does not help. This is because he still does not know which of those particles are entangled with Alice and which ones are with Trent. Bob's cheating strategy should allow him to bind the contract, such that Alice is unable to bind the contract even with Trent's help. In order to bind the contract for himself, Bob must pass the test by providing to Trent at least a fraction $\alpha$ of correct measurement results on the qubits entangled with $\mathcal{T}_{T2}^{(A)}$, as well as $\mathcal{T}_{T2}^{(B)}$, respectively. At the same time, Bob does not want Alice to pass the test on the results sent by him (corresponding to qubits entangled with $\mathcal{T}_{T1}^{(B)})$ to her. Making measurements in a basis other than the one given by $\hat{H}_{\texttt{B}_i}$ on some of the qubits and sending those results to Alice does not help him in any way, because then he ends up with incorrect results on some of his qubits in $\mathcal{T}_{T2}^{(B)}$ entangled with Trent. While these incorrect results will spoil Alice's chances to bind the contract, they will equally decrease Bob's chances too. Hence, Bob's best strategy is to measure his Honest observable $\hat{H}_{\texttt{B}_i}$ on all his qubits, to pass the test on all $\mathcal{T}_{T2}^{(B)}$ qubits, and send random bits to Alice, by choosing a random probability $f$ to decide whether or not to flip the result that he sends to Alice (the frequency of sending wrong results). Therefore, the probability for Alice to obtain correct results on these qubits will be
\begin{equation}\label{eq:tilde_P}
\tilde{P}_{=}=(1-f)P_{=} + f P_{\neq},
\end{equation} 
where $P_{=} = p(00) + p(11) $ and $P_{\neq} =p(10) + p(01)$.

\begin{figure}
\centering  
\begin{tabular}{cc}        
\includegraphics[width=6.5cm]{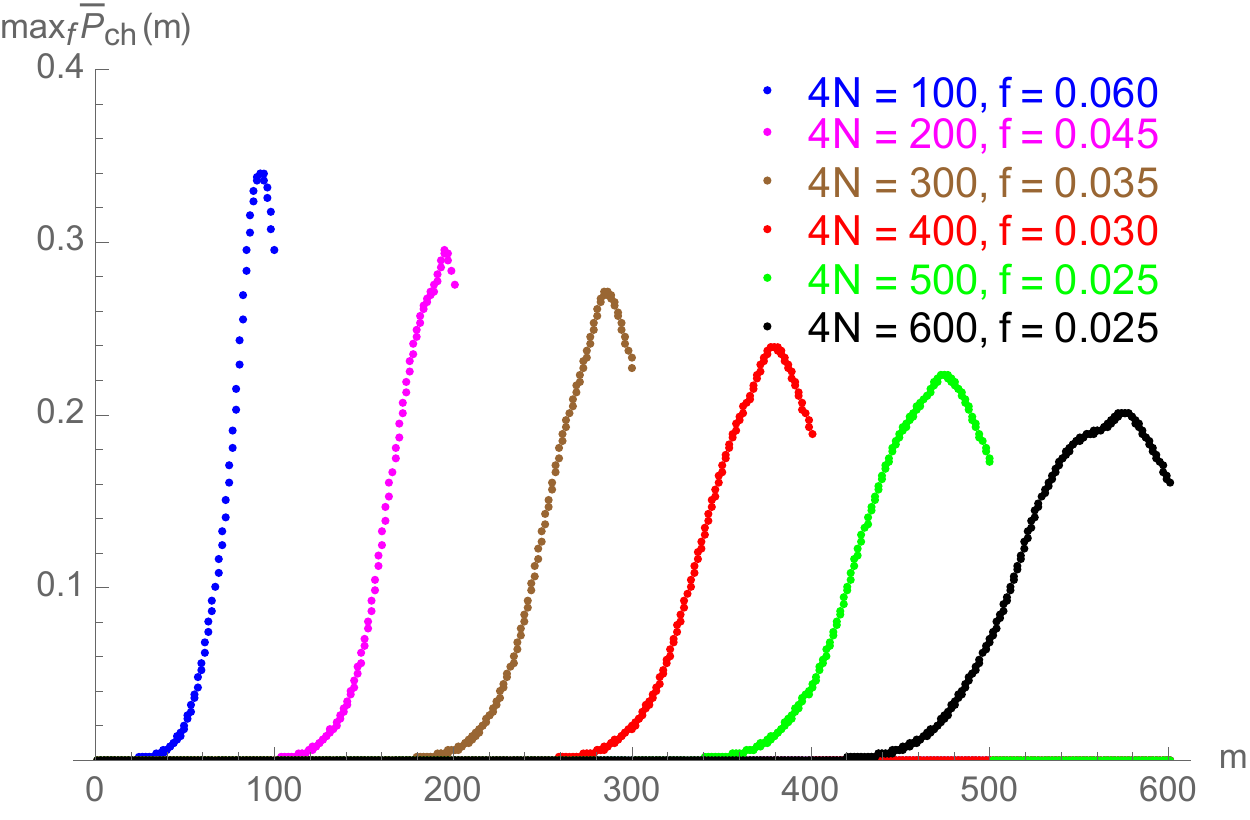}  \ \ \ \      &
\includegraphics[width=6.5cm]{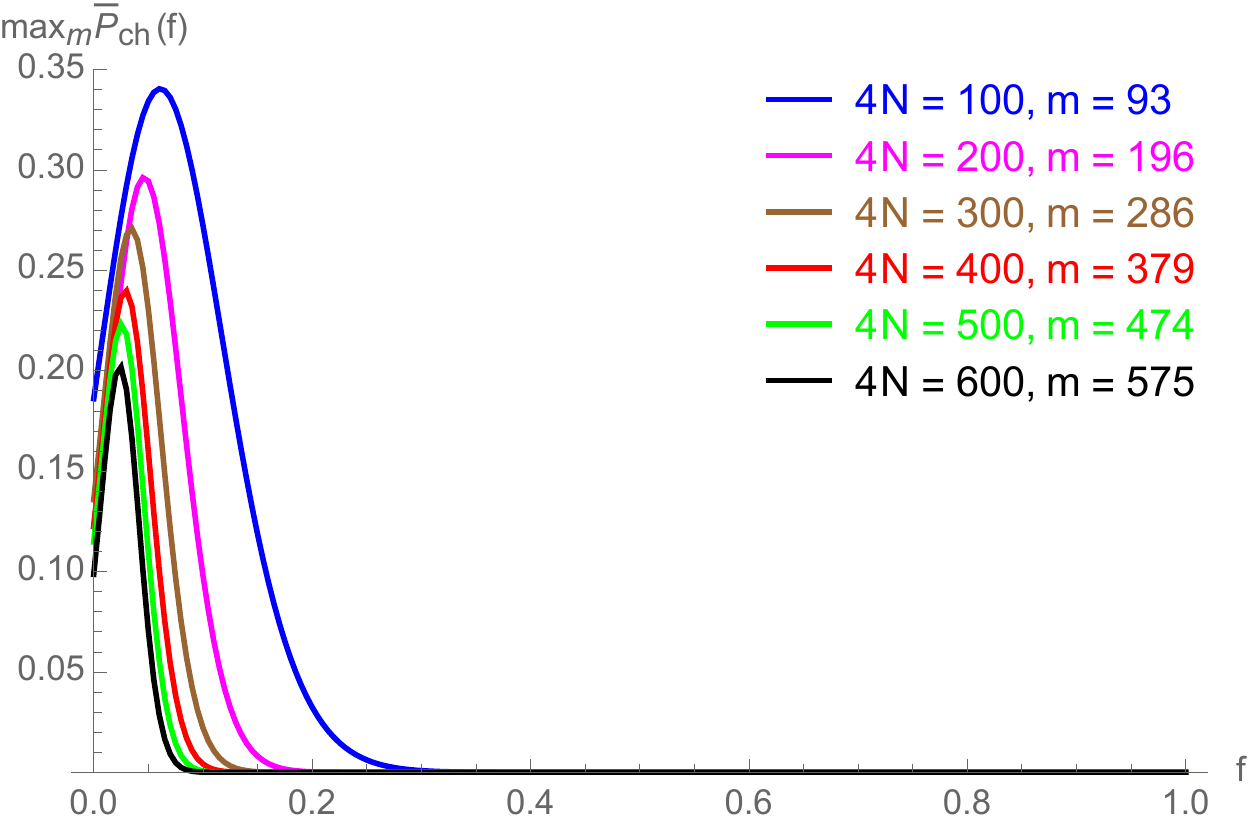}\\
\includegraphics[width=6.5cm]{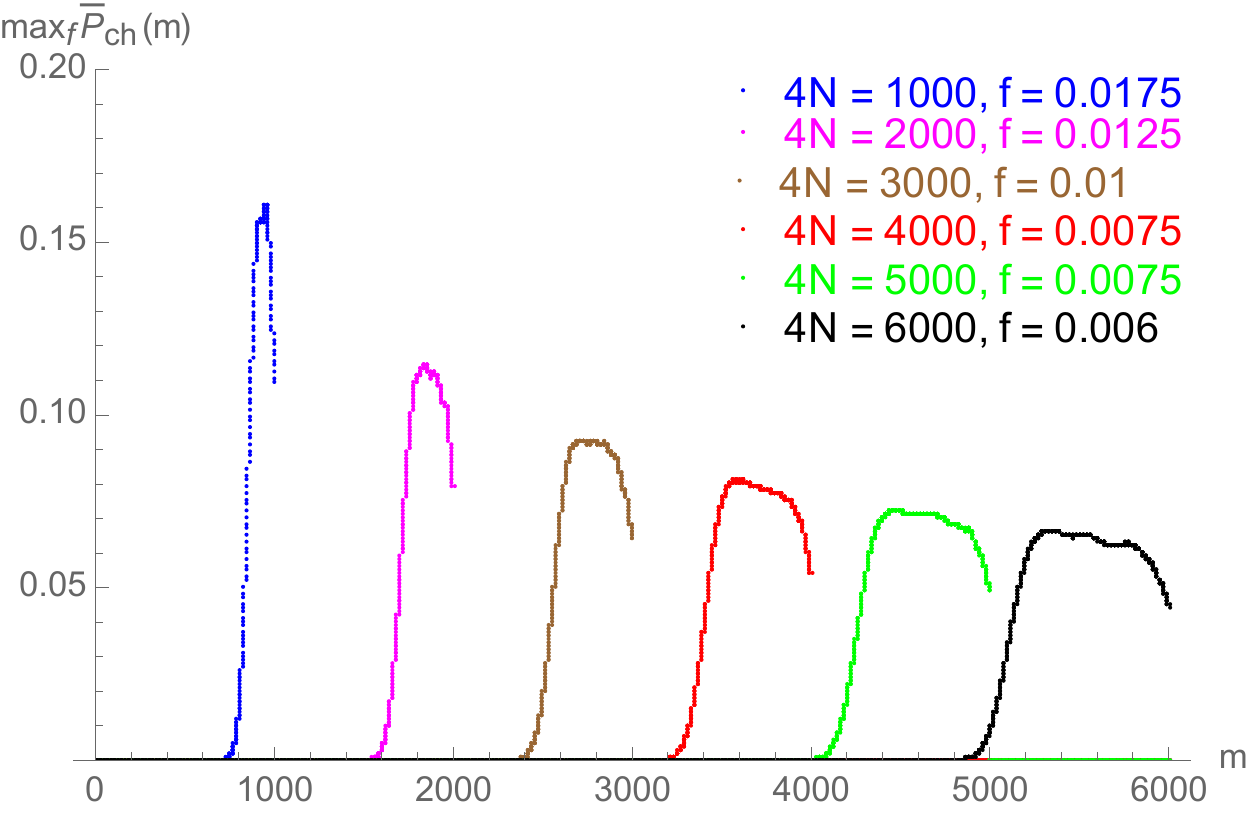}   \ \ \ \     &
\includegraphics[width=6.5cm]{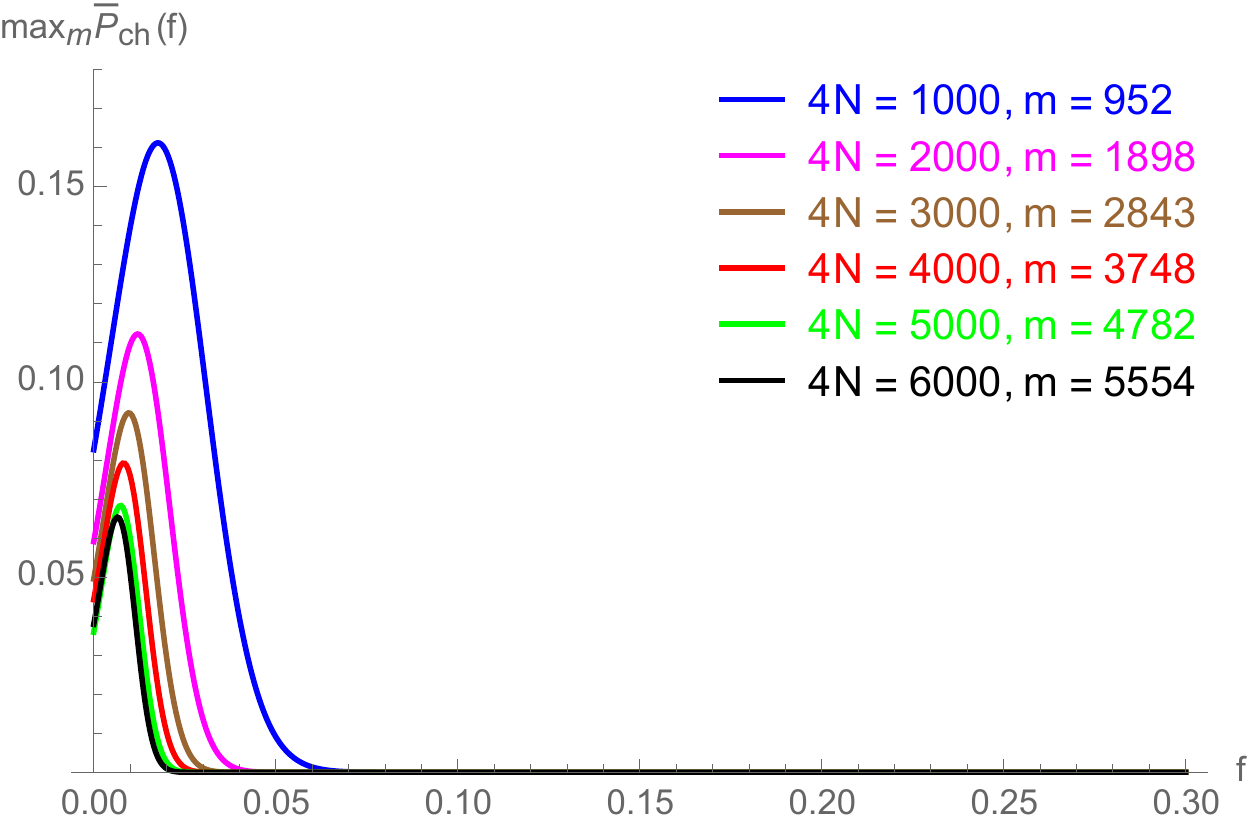}
\end{tabular}
\caption{\footnotesize{(Color online) Dishonest scenario (noisy channel): the maximal expected probability to cheat, $ \bar{P}_{ch}(m,f)$, is first plotted against the step of communication interruption, $m$ (taking the optimal value for $f$), and then against the cheating parameter, $f$ (taking the optimal value for $m$), for the case for the uniform $p(\alpha)$ on the interval $[0.9,0.99]$. The noise parameter is $\kappa = 0.05$.}}
\label{fig:dishonest_noisy2D}
\end{figure}

The communication is interrupted at step $m$, after Alice stops sending the measurement outcomes to Bob upon suspecting a dishonest behavior. Recall that we are considering the case where Alice is the first one to start the communication, and therefore, in the worst case scenario after the $m$-th step Alice and Bob each have $m$ measurement results from each other.

With $m$ measurement results each, the probability for Alice and Bob to pass Trent's test on her/his own qubits, $P_{ATH}(\alpha)$ and $P_{BTH}(\alpha)$ respectively, remains the same as in the honest noisy case. Since Bob decides to flip the results randomly, based on $f$, he is bound to send wrong results on some of the qubits. Hence, Alice receives more incorrect results from Bob as compared to the honest noisy case. Bob on the other hand receives the same number of correct results as in the honest noisy case. Therefore, $\mathcal P_{ABS}(m)$ and $P_{BTA}(m;\alpha)$ remain the same, while $P_{ATB}(m;\alpha)$ and $\mathcal P_{BAS}(m)$ are modified by replacing $P_{=}$ and $P_{\neq}$ by $\tilde{P}_{=}$ and $\tilde{P}_{\neq} = 1 - \tilde{P}_{=}$~\eqref{eq:tilde_P}.

The expected probability to cheat, $\bar{P}_{ch}(m,f)$, is plotted in Figure~\ref{fig:dishonest_noisy2D}. The same behavior (as in the case of honest clients), $\max_{m,f} \bar{P}_{ch}(m,f) \propto N^{-1/2}$, is observed, as in Figures~\ref{fig:plots} (c) and~\ref{fig:plots} (f) from the main text.

\bibliographystyle{unsrt}

\end{document}